\def\doi{8(4:10)2012}
\theoremstyle{plain}\newtheorem{thm2}{Theorem}
\newcommand{\comment}[1]{}
\newcommand{\qdot}{.}
\newcommand{\T}{\ensuremath \mathcal{T}}
\newcommand{\void}[1]{}
\newcommand{\Epsilon}{\mathsf{E}}
\DeclareMathOperator{\udot}{\dot\cup}
\DeclareMathOperator{\ccomp}{\circ}
\newcommand{\ccompf}[1]{\circ_{\!f}}
\newcommand{\circuit}[1]{\ensuremath \mathrm{cir}_{{#1}}}
\newcommand{\eT}{\mathcal{T}}
\newcommand{\parent}{\mathrm{parent}}
\newcommand{\cirf}[1]{\ensuremath f_{#1}}
\newcommand{\LTL}{LTL}
\newcommand{\PLTL}{LTL+Past}
\newcommand{\BLTL}{BLTL+Past}
\newcommand{\Bool}{\mathbb{B}}
\newcommand{\Nat}{\mathbb{N}}
\newcommand\zero{{\tt 0}}
\newcommand\one{{\tt 1}}
\DeclareMathOperator{\release}{\mathrm{R}}
\DeclareMathOperator{\until}{\mathrm{U}}
\newcommand{\buntil}[1]{\,\mathrm{U}_{#1}\,}
\newcommand{\brelease}[1]{\,\mathrm{R}_{#1}\,}
\newcommand{\Next}{\mathrm{X}^{\exists}}
\newcommand{\WNext}{\mathrm{X}^{\forall}}
\newcommand{\G}{\,\mathrm{G}\,}
\DeclareMathOperator{\prelease}{\mathrm{T}} 
\DeclareMathOperator{\puntil}{\mathrm{S}}   
\newcommand{\pbuntil}[1]{\,\mathrm{S}_{#1}\,} 
\newcommand{\pbreleaseNoblank}[1]{\,\mathrm{T}_{#1}}
\newcommand{\pbrelease}[1]{\,\mathrm{T}_{#1}\,}
\newcommand{\pNext}{\mathrm{Y}^{\exists}} 
\newcommand{\pWNext}{\mathrm{Y}^{\forall}}
\newcommand{\clx}[1]{{\sf #1}}
\newcommand{\logDCFL}{\mbox{\clx{logDCFL}}}
\newcommand{\logCFL}{\mbox{\clx{logCFL}}}
\newcommand{\NL}{\mbox{\clx{NL}}}
\newcommand{\logspace}{\mbox{\clx{L}}}
\newcommand{\poly}{\mbox{\clx{P}}}
\newcommand{\AC}{\mbox{\clx{AC}}}
\newcommand{\ACO}{\mbox{\clx{AC}}^1}
\newcommand{\NC}{\mbox{\clx{NC}}}
\newcommand{\NCO}{\mbox{\clx{NC}}^1}
\newcommand\gop{\ensuremath \mathit{op}}
\newcommand\gleft{\ensuremath \mathit{left}}
\newcommand\gright{\ensuremath \mathit{right}}
\newcommand\gaid{\ensuremath \mathit{id}}
\newcommand\gsuc{\ensuremath \mathit{suc}}
\newcommand\gand{\ensuremath \mathit{and}}
\newcommand\gor{\ensuremath \mathit{or}}
\newcommand\gvar{\ensuremath \mathit{?}}
\newcommand{\abs}[1]{\| #1 \|}
\newcommand{\ceil}[1]{\left\lceil#1 \right\rceil}
\DeclareMathOperator{\ddep}{\cdot\hspace{-1ex}\succ}
\newcommand{\cir}{\ensuremath \mbox{cir}}
\newcommand{\gr}{\ensuremath \mbox{gr}}
\newcommand{\constants}{\ensuremath \mbox{const}}
\newcommand{\inputs}{\ensuremath \mbox{src}}
\newcommand{\vars}{\ensuremath \mbox{var}}
\begin{document}

\title[Efficient Parallel Path Checking for LTL With Past and Bounds]{Efficient Parallel Path Checking for Linear-Time Temporal Logic With Past and Bounds\rsuper*}

\author[L.~Kuhtz]{Lars Kuhtz}
\address{Universit\"at des Saarlandes\\ 66123 Saarbr\"ucken, Germany}
\email{\{kuhtz, finkbeiner\}@cs.uni-saarland.de}
\thanks{This work was partly supported by the German Research Foundation (DFG) as part of the Transregional Collaborative Research Center ``Automatic Verification and Analysis of Complex Systems'' (SFB/TR 14 AVACS)}

\titlecomment{{\lsuper*}\prettyref{thm:mainltl} of this paper has been published before in \cite{KuhtzFinkbeiner09}. The results in this paper are also included in the Ph.D.\ thesis of the first author \cite{Kuhtz10}.}
 
\author{Bernd Finkbeiner}
\address{\vskip-6 pt}

\keywords{linear-time temporal logic (LTL), linear-time temporal logic with past, bounded temporal operators, model checking, path checking, parallel complexity}
\subjclass{F.4.1, F.2.2}









\begin{abstract}
  Path checking, the special case of the model checking problem where
  the model under consideration is a single path, plays an important
  role in monitoring, testing, and verification.  We prove that for
  linear-time temporal logic (\LTL), path checking can be efficiently
  parallelized.  In addition to the core logic, we consider the
  extensions of \LTL\ with bounded-future (BLTL) and past-time (\PLTL) operators.
  Even though both extensions improve the succinctness of the logic
  exponentially, path checking remains efficiently parallelizable: Our
  algorithm for \LTL, \LTL+Past, and \BLTL\
  is in $\ACO(\logDCFL) \subseteq \NC$.
 \end{abstract}

\maketitle

\section{Introduction}






\noindent Linear-time temporal logic (\LTL) is the standard specification
language to describe properties of computation paths.  The problem of
checking whether a given finite path satisfies an \LTL\ formula plays a key role
in monitoring and runtime
verification~\cite{Havelund+Rosu/04/Efficient,Finkbeiner+Sipma/04/Checking,Dahan+others/05/Combining,ArmoniKorchemnyTiemeyerVardiZbar06,BouleZilic2008},
where individual paths are checked either online, during the execution of the
system, or offline, for example based on an error report. Similarly, path
checking occurs in testing~\cite{Artho+others/05/Combining} and in several
static verification techniques, notably in Monte-Carlo-based probabilistic
verification, where large numbers of randomly generated sample paths are
analyzed~\cite{Younes+Simmons/02/Probabilistic}. 

Somewhat surprisingly, given the widespread use of \LTL, the complexity of the
path checking problem is still open~\cite{Markey+Schnoebelen/03/Model}.  The
established upper bound is $\poly$: The algorithms in the literature traverse
the path sequentially
(cf.~\cite{Finkbeiner+Sipma/04/Checking,Markey+Schnoebelen/03/Model,Havelund+Rosu/04/Efficient});
by going backwards from the end of the path, one can ensure that, in each step,
the value of each subformula is updated in constant time, which results in
running time that is quadratic in the size of the formula plus the length
of the path.
\comment{Well, it is easy to see, that for either parameter fixed the complexity
is in \logspace.}
The only known lower bound is
$\NCO$~\cite{Demri+Schnoebelen/02/Complexity}, the complexity of evaluating
Boolean expressions. 
The large gap between the bounds is especially unsatisfying in light of the
recent trend to implement path checking algorithms in hardware, which is
inherently parallel.  For example,  the IEEE standard temporal logic
PSL~\cite{IEEE/07/Standard}, an extension of \LTL, has become part of the
hardware description language VHDL, and several
tools~\cite{Dahan+others/05/Combining,BouleZilic2008,FinkbeinerKuhtz09} are available to
synthesize hardware-based monitors from assertions written in PSL.  Can we
improve over the sequential approach by evaluating entire blocks of path
positions in parallel?

\paragraph{\bf Parallelizing \LTL\ path checking}
We show that \LTL\ path checking can indeed be parallelized
efficiently.  Our approach is inspired by work in the related area of evaluating
monotone Boolean
circuits~\cite{Goldschlager80,DymondCook89,Kosaraju90,BarringtonLuMiltersenSkyum99,LimayeMahajanJayalalSarma06,ChakrabortyDatta06}.
Rather than sequentially traversing the path, we consider the circuit that
results from unrolling the formula in positive normal form over the path using
the expansion laws of the logic. Using the positive normal form of the formula
ensures that the resulting circuit is monotone. 
The size of the circuit is quadratic in the size of formula plus the size of the
path. For logarithmic measures, the circuit thus is of the same order as the
input.
Figure~\ref{fig:simple} shows such a circuit for the formula $\left(\left(
a\until b \right) \until \left( c \until d \right)\right) \until e$ and a path
of length 5.

\begin{figure}
  \begin{center}

    \newcommand{\mj}{7}
    \newcommand{\gate}[2]{g{#1-#2}}   
    \newcommand{\gatee}[2]{e{#1-#2}}  
    \newcommand{\gateo}[2]{o{#1-#2}}  
    \newcommand{\gatephi}[2]{phi#1-#2}

    \newcommand{\andgate}{}
    \newcommand{\orgate}{}
    \newcommand{\idgate}[3][]{$[c#1_{#2,#3}]$}


  \def\mj{4}
  \tikzstyle{perspective dimetric}=[%
  x={(canvas polar cs:angle=0,radius=1cm)}, 
  y={(canvas polar cs:angle=90,radius=1cm)}, 
  z={(canvas polar cs:angle=30,radius=0.8cm)} 
]
  \begin{tikzpicture}[join=round, 
    perspective dimetric, font=\scriptsize,
    <-
    ]

     \tikzstyle{nn}=[circle,inner sep=1.5pt]
     \tikzstyle{an}=[nn,draw,fill=gray];
     \tikzstyle{on}=[nn,draw,fill=gray];
     \tikzstyle{cn}=[nn];

     \def\nn#1,#2,#3{#1x#2x#3}
     \def\constant(#1,#2,#3,#4,#5,#6){%

       \def\xxx{1} \ifnum \mj = #4 \if r#6 \def\xxx{0} \fi\fi
       \if 1\xxx
       \if r#6 \pgfmathsetmacro{\zz}{1} \else \pgfmathsetmacro{\zz}{0} \fi

       \node[cn] (\nn#1,0,#4) at ($(#2,#3,\zz)+2*(0,0,#4)$) {$#5_{#4}$};

       \if l#6 \def\d{0} \else \def\d{1} \fi
       
       \def\xxx{1} \ifnum #4 = \mj \ifnum \d = 1 \def\xxx{0} \fi \fi
       \if 1\xxx
       \pgfmathtruncatemacro{\epr}{#1 / 10}
       \ifnum \epr < 1 \def\pr{1} \else \def\pr{\epr} \fi
       \if 0 #3 \else \draw (\nn#1,0,#4) -- (\nn\pr,\d,#4); \fi
       \fi
       \fi
     }

     \def\until(#1,#2,#3,#4,#5){%
       \ifnum #4 = \mj \def\lab{} \else \def\lab{$\vee$} \fi
       \node[on] (\nn#1,0,#4) at ($(#2,#3,0)+2*(0,0,#4)$) {\lab};
      
       \ifnum #4 < \mj
       \node[an] (\nn#1,1,#4) at ($(#2,#3,1)+2*(0,0,#4)$) {$\wedge$};
       \draw[->] (\nn#1,0,#4) -- (\nn#1,1,#4); \fi

      
       \pgfmathtruncatemacro{\pz}{#4+1}
       \ifnum #4 < \mj \draw (\nn#1,0,\pz) -- (\nn#1,1,#4); \fi
       
       \if l#5 \def\d{0} \else \def\d{1} \fi
       
       \def\xxx{1} \ifnum #4 = \mj \ifnum \d = 1 \def\xxx{0} \fi \fi
       \if 1\xxx
       \pgfmathtruncatemacro{\epr}{#1 / 10}
       \ifnum \epr < 1 \def\pr{1} \else \def\pr{\epr} \fi
       \if 0#3 \else \draw (\nn#1,0,#4) -- (\nn\pr,\d,#4); \fi
       \fi
     }

     \foreach \z in {\mj,3,2,1,0} {
       \begin{scope}[transparent]
       \until(1,0,0,\z,x)
       \constant(10,-.3,-.5,\z,e,l)
       \until(11,1,-1.5,\z,r)
       \until(110,0,-3,\z,l)
       \until(111,2,-3,\z,r)
       \constant(1100,-.3,-3.5,\z,d,l)
       \constant(1101,0.3,-3.5,\z,c,r)
       \constant(1110,1.7,-3.5,\z,b,l)
       \constant(1111,2.3,-3.5,\z,a,r)
     \end{scope}
       \constant(1100,-.3,-3.5,\z,d,l)
       \constant(1101,0.3,-3.5,\z,c,r)
       \constant(1110,1.7,-3.5,\z,b,l)
       \constant(1111,2.3,-3.5,\z,a,r)
       \until(111,2,-3,\z,r)
       \until(110,0,-3,\z,l)
       \until(11,1,-1.5,\z,r)
       \constant(10,-.3,-.5,\z,e,l)
       \until(1,0,0,\z,x)
     }
      
  \end{tikzpicture}
  \caption{Circuit resulting from unrolling the LTL formula $\left(\left(
  a\until b \right) \until \left( c \until d \right)\right) \until e$ over a
  path $\rho$ of length 5. We denote the value of an atomic proposition $p$ at
  a path position $i=0,\dots,4$ by $p_i$.%
  } \label{fig:simple}
\end{center}
\end{figure}

Yang \cite{Yang91} and, independently, Delcher and Kosaraju
\cite{DelcherKosaraju95} have shown that monotone Boolean circuits can be
evaluated efficiently in parallel \emph{if the graph of the circuit has a planar
embedding}. Unfortunately, this condition is already violated in the simple
example of Figure~\ref{fig:simple} as shown in Figure~\ref{fig:simple-non-planar}.
Individually, however, each operator results in a planar circuit: for example,
$d\,\until\,e$ results in $e_0 \vee (d_0\wedge (e_1 \vee (d_1 \wedge
\ldots)\cdots)$. The complete formula thus defines  a tree of planar circuits. 

\begin{figure}
  \begin{center}

    \newcommand{\mj}{7}
    \newcommand{\gate}[2]{g{#1-#2}}   
    \newcommand{\gatee}[2]{e{#1-#2}}  
    \newcommand{\gateo}[2]{o{#1-#2}}  
    \newcommand{\gatephi}[2]{phi#1-#2}



  \def\mj{4}
  \tikzstyle{perspective dimetric}=[%
  x={(canvas polar cs:angle=0,radius=1cm)}, 
  y={(canvas polar cs:angle=90,radius=1cm)}, 
  z={(canvas polar cs:angle=30,radius=0.8cm)} 
]
  \begin{tikzpicture}[join=round, 
    perspective dimetric, font=\scriptsize,
    <-
    ]

     \tikzstyle{nn}=[circle,inner sep=1.5pt]
     \tikzstyle{an}=[nn,draw,fill=gray];
     \tikzstyle{on}=[nn,draw,fill=gray];
     \tikzstyle{cn}=[nn];

     \def\nn#1,#2,#3{#1x#2x#3}
     \def\constant(#1,#2,#3,#4,#5,#6){%

       \def\xxx{1} \ifnum \mj = #4 \if r#6 \def\xxx{0} \fi\fi
       \if 1\xxx
       \if r#6 \pgfmathsetmacro{\zz}{1} \else \pgfmathsetmacro{\zz}{0} \fi

       \node[cn] (\nn#1,0,#4) at ($(#2,#3,\zz)+2*(0,0,#4)$) {$#5_{#4}$};

       \if l#6 \def\d{0} \else \def\d{1} \fi
       
       \def\xxx{1} \ifnum #4 = \mj \ifnum \d = 1 \def\xxx{0} \fi \fi
       \if 1\xxx
       \pgfmathtruncatemacro{\epr}{#1 / 10}
       \ifnum \epr < 1 \def\pr{1} \else \def\pr{\epr} \fi
       \if 0 #3 \else \draw (\nn#1,0,#4) -- (\nn\pr,\d,#4); \fi
       \fi
       \fi
     }

     \def\until(#1,#2,#3,#4,#5){%
       \ifnum #4 = \mj \def\lab{} \else \def\lab{$\vee$} \fi
       \node[on] (\nn#1,0,#4) at ($(#2,#3,0)+2*(0,0,#4)$) {\lab};
      
       \ifnum #4 < \mj
       \node[an] (\nn#1,1,#4) at ($(#2,#3,1)+2*(0,0,#4)$) {$\wedge$};
       \draw[->] (\nn#1,0,#4) -- (\nn#1,1,#4); \fi

       \pgfmathtruncatemacro{\pz}{#4+1}
       \ifnum #4 < \mj \draw (\nn#1,0,\pz) -- (\nn#1,1,#4); \fi
       
       \if l#5 \def\d{0} \else \def\d{1} \fi
       
       \def\xxx{1} \ifnum #4 = \mj \ifnum \d = 1 \def\xxx{0} \fi \fi
       \if 1\xxx
       \pgfmathtruncatemacro{\epr}{#1 / 10}
       \ifnum \epr < 1 \def\pr{1} \else \def\pr{\epr} \fi
       \if 0#3 \else \draw (\nn#1,0,#4) -- (\nn\pr,\d,#4); \fi
       \fi
     }

     \foreach \z in {\mj,3,2,1,0} {
       \begin{scope}[transparent]
       \until(1,0,0,\z,x)
       \constant(10,-.3,-.5,\z,e,l)
       \until(11,1,-1.5,\z,r)
       \until(110,0,-3,\z,l)
       \until(111,2,-3,\z,r)
       \constant(1100,-.3,-3.5,\z,d,l)
       \constant(1101,0.3,-3.5,\z,c,r)
       \constant(1110,1.7,-3.5,\z,b,l)
       \constant(1111,2.3,-3.5,\z,a,r)
     \end{scope}
       \constant(1100,-.3,-3.5,\z,d,l)
       \constant(1101,0.3,-3.5,\z,c,r)
       \constant(1110,1.7,-3.5,\z,b,l)
       \constant(1111,2.3,-3.5,\z,a,r)
       \until(111,2,-3,\z,r)
       \until(110,0,-3,\z,l)
       \until(11,1,-1.5,\z,r)
       \constant(10,-.3,-.5,\z,e,l)
       \until(1,0,0,\z,x)
     }

    \tikzstyle{group}=[-,draw=red,line width=3pt,rounded corners=2pt, draw opacity=0.7]
    \tikzstyle{group node}=[group,fill, color=red, fill opacity=0.6]
    \draw[group node] (\nn1,1,1.west) -- (\nn1,1,1.north)
      -- (\nn1,1,3.north) -- (\nn1,1,3.east)
      -- (\nn1,1,3.south) -- (\nn1,1,1.south) 
      -- cycle ;

    \draw[group node] (\nn110,0,1.west) -- (\nn110,0,1.north)
      -- (\nn110,0,3.north) -- (\nn110,0,3.east)
      -- (\nn110,0,3.south) -- (\nn110,0,1.south) 
      -- cycle ;

    \draw[group node] (\nn11,0,1.west) -- (\nn11,0,1.north) -- (\nn11,0,1.east) -- (\nn11,0,1.south) -- cycle;
    \draw[group node] (\nn11,0,2.west) -- (\nn11,0,2.north) -- (\nn11,0,2.east) -- (\nn11,0,2.south) -- cycle;
    \draw[group node] (\nn11,0,3.west) -- (\nn11,0,3.north) -- (\nn11,0,3.east) -- (\nn11,0,3.south) -- cycle;

    \draw[group] (\nn1,1,1) -- (\nn11,0,1);
    \draw[group] (\nn1,1,2) -- (\nn11,0,2);
    \draw[group] (\nn1,1,3) -- (\nn11,0,3);
    \draw[group] (\nn11,0,1) -- (\nn110,0,1);
    \draw[group] (\nn11,0,2) -- (\nn110,0,2);
    \draw[group] (\nn11,0,3) -- (\nn110,0,3);
    \draw[group] (\nn11,0,1) -- (\nn11,0,2);
    \draw[group] (\nn11,0,2) -- (\nn11,0,3);
    \draw[group] (\nn110,0,1) -- (\nn110,0,0.center) -- (\nn11,0,0.center) -- (\nn1,1,0.center) -- (\nn1,1,1);
    \draw[group] (\nn11,0,1) -- (\nn11,1,0.center) -- (\nn111,0,0.center) -- (\nn111,0,3.center) -- (\nn11,1,3.center) -- (\nn11,0,3);


  \end{tikzpicture}
  \caption{Circuit resulting from unrolling the LTL formula $\left(\left(
  a\until b \right) \until \left( c \until d \right)\right) \until e$ over a
  path $\rho$ of length 5. The red colored minor of the graph of the circuit is a
  $K_5$.  Thus the circuit is not planar.} \label{fig:simple-non-planar}
\end{center}
\end{figure}

Our path checking algorithm works on this tree of circuits. 
We perform a parallel tree contraction
\cite{AbrahamsonEtAl89,KosarajuDelcher88,KarpRamachandran90} to collapse a
parent node and its children nodes into a single planar circuit.  
Simple paths in the tree immediately collapse into a planar circuit; the
remaining binary tree is contracted incrementally, until only a single planar
circuit remains.
The key insight for this construction is that a contraction can be carried out
\emph{as soon as one of the children has been evaluated}.
Initially, all leaves correspond to atomic propositions. 
During the contraction all leaves are evaluated.
Because no leaf has to wait for the evaluation of its sibling before it can be
contracted with its parent, we can contract a fixed portion of the nodes in
every sequential step, and therefore terminate in at most a logarithmic number
of steps.

The path checking problem can, hence, be parallelized efficiently.
The key properties of \LTL\ that are exploited in the construction are the
existence of a positive normal form of linear size and expansion laws that, when
iteratively applied, increase the size of the Boolean circuit only
linearly in the number of iteration steps.
The combinatorial structure of the resulting circuit allows for an
efficient reduction of the evaluation problem to the evaluation problem of
planar Boolean circuits.
In addition to planarity, our construction maintains some further technical
invariants, in particular that the circuits have all input gates on the outer
face.
Analyzing this construction, we obtain the result that the path checking problem
is in~$\ACO(\logDCFL)$:

\begin{thm2}\label{thm:mainltl} 
  The \LTL\ path checking problem is in $\ACO(\logDCFL)$.  
\end{thm2}

The $\ACO(\logDCFL)$ complexity results from a reduction that is performed by
an outer algorithm, which can be implemented as a uniform family of Boolean
circuits of logarithmic depth, and an inner operation, which is represented through
unbounded fan-in gates that are embedded in the circuits and that serve as
$\logDCFL$ oracles.
The $\ACO$ complexity of the outer reduction is due to the tree contraction
algorithm.  Throughout the contraction, each atomic contraction operation
processes a sub-circuit with a size that is of the same order as the overall input
circuit.  Hence, the oracle gates have non-constant fan-in. The whole tree
contraction is performed in a logarithmic number of parallel steps. Thus, the
contraction algorithm implements an $\ACO$ reduction.
Within the $\ACO$ contraction circuits, the oracle gates perform the evaluation
of a certain class of monotone planar Boolean circuits. This operation can be
bound to a complexity of $\logDCFL$.
In summary, the overall path checking algorithm consists of two sequential
reduction steps: First the \LTL\ path checking problem is reduced (in logarithmic
space) to the problem of evaluating a certain class of monotone Boolean
circuits. Second, the problem of evaluating those circuits is $\ACO$ reduced to
the problem of evaluating a certain class of monotone planar Boolean circuits.
The latter problem is solved by an $\logDCFL$ oracle.

The \LTL\ path checking problem is closely related to the membership
problems for the various types of regular expressions: the membership problem
is in \NL\ for regular expressions~\cite{JiangRavikumar91}, in \logCFL\
for semi-extended regular expressions~\cite{Petersen02}, and
\poly-complete for star-free regular expressions and extended regular
expressions~\cite{Petersen00}.  Of particular interest is the
comparison to the star-free regular expressions, since they have the
same expressive power as \LTL\ on finite
paths~\cite{LichtensteinPnueliZuck85}. With $\ACO(\logDCFL)$
vs. $\poly$, our result demonstrates a computational advantage for
\LTL.

\paragraph{\bf \LTL\ with past and bounded-future operators}

Practical temporal logics like PSL 
extend \LTL\  with additional operators that help the user to write shorter
and simpler specifications.  Such extensions often come at a price: adding
extended regular expressions, for example, makes the path checking problem
\poly-complete~\cite{Petersen00}. We show that this is not always the case:
past-time and bounded operators are two major extensions of \LTL, which both
improve the succinctness of the logic exponentially, and whose path checking
problems remain efficiently parallelizable.

\emph{Past-time operators} are the dual of the standard modalities, referring to
past instead of future events. Past-time operators greatly simplify properties
like ``$b$ is always preceded by $a$'', which, in the core logic, require an
unintuitive application of the Until operator, as in $\G \neg (\neg a \until b
\wedge \neg a)$.  Furthermore, Laroussinie, Markey and
Schoebelen~\cite{LaroussinieMarkeySchnoebelen02} proved that the property ``all
future states that agree with the initial state on propositions $p_1, p_2,
\ldots p_n$, also agree on proposition $p_0$,'' which can obviously be expressed
as a simple past-time formula, requires an exponentially larger formula if only
future-time operators are allowed. However, since past operators are the dual of
future operators, they also result in planar circuits; hence, the construction
for \LTL\ can directly be applied to the tree of circuits that results
from \LTL\ formulas with unbounded past and future operators and we obtain the
following result: 

\begin{thm2}\label{thm:mainpltl} 
  The \PLTL\ path checking problem is in $\ACO(\logDCFL)$.  
\end{thm2}

\emph{Bounded operators} express that a condition holds \emph{at least} for a
given, fixed number of steps, or must occur \emph{within} such a number of
steps. Bounded specifications are especially useful in monitoring applications
\cite{FinkbeinerKuhtz09}, where unbounded modalities are problematic: if only
the finite prefix of a computation is visible, it is impossible to falsify an
unbounded liveness property or validate an unbounded safety property. The
succinctness of the bounded operators is due to the fact that expanding the
bounded operators into a formula tree replicates subformulas, causing an
exponential blow-up in the formula size.
Another exponential blow-up is due to the logarithmic encoding of the bounds
compared to an unary encoding in the form of nested next-operators.

A naive solution for the path checking problem of
the extended logic would be to simply unfold the formula to the core
fragment and then apply the construction described above for the
\LTL\ operators. Because of the doubly exponential blow-up, however, such a
solution would no longer be in $\NC$.
If we instead apply the expansion laws for the bounded operators to
the original formula, we obtain a circuit of polynomial size, but with a
more complex structure. Because of this more complex structure, the
path checking construction that we described above for the core logic
is no longer applicable.
Consider the circuit corresponding to the bounded formula $\phi \buntil{3}
\psi$, shown in Figure~\ref{fig:non-planar}: Since the graph of the circuit
contains a $K_{3,3}$ subgraph, it has no planar embedding. Translating a formula
with bounded operators to a tree of circuits would thus include non-planar
circuits, which in general cannot be evaluated efficiently in parallel.

\begin{figure}[t]
  \begin{center}

    \newcommand{\mj}{7}
    \newcommand{\mk}{3}
    \newcommand{\gate}[2]{j{#1}k{#2}}
    \newcommand{\gatephi}[2]{phi{#1}x{#2}}
    \newcommand{\gatepsi}[1]{psi#1}

    \newcommand{\andgate}[2]{$\wedge$}
    \newcommand{\orgate}[2]{$\vee$}
    \newcommand{\idgate}[2]{id}

    \scalebox{1}{%
  \begin{tikzpicture}[font=\small,scale=1.8,<-]

    \foreach \k in {\mk,...,0} {
    \pgfmathtruncatemacro{\start}{(\mk-\k)}

    \foreach \j in {0,...,\mj} {

    \ifthenelse { \k=0 \OR \j=\mj} 
    {\node (\gate{\j}{\k}) at ($(\j,\k)$) {\idgate{\j}{\k}};}
    {\node (\gate{\j}{\k}) at ($(\j,\k)$) {\orgate{\j}{\k}};}
    \ifnum \k=0 
      \node[color=gray] (\gatepsi{\j}) at (\j,-0.5) {$\psi_{\j}$};
      \draw (\gatepsi{\j}) -- (\gate{\j}{\k});
    \fi

    \ifnum \k=\mk \ifnum \j>0
    \pgfmathtruncatemacro{\yyy}{\k+1}
    \pgfmathtruncatemacro{\xxx}{\j-1}
    \node[color=gray] (phi\xxx) at ($(\xxx,-.5) - (0,.5)$) {$\phi_{\xxx}$};
    \fi\fi

    \ifnum \k<\mk
    \pgfmathtruncatemacro{\yyy}{\k+1}
      \draw (\gate{\j}{\k}) -- (\gate{\j}{\yyy});
      \ifnum \j>0
        \pgfmathtruncatemacro{\xxx}{\j-1}
        \node (\gatephi{\xxx}{\yyy}) at ($(\gate{\xxx}{\yyy})!0.5!(\gate{\j}{\k})$) {\andgate{\xxx}{\yyy}};
        \draw[<-] (\gate{\j}{\k}) -- (\gatephi{\xxx}{\yyy});
        \draw[<-] (\gatephi{\xxx}{\yyy}) -- (\gate{\xxx}{\yyy});

         \draw[->] (\gatephi{\xxx}{\yyy}) edge[bend left] (phi\xxx);
      \fi
    \fi
    }}

    \tikzstyle{k33}=[-,color=red,draw opacity=.7,line width=3pt,draw,circle,rounded corners=2pt]
    \coordinate (k0) at (\gate{0}{3});
    \node[k33] (k1) at (\gate{0}{2}) {};
    \coordinate (k2) at (\gate{0}{1});

    \node[k33] (k3) at (\gatephi{0}{3}) {};
    \node[k33] (k4) at (\gatephi{0}{2}) {};
    \node[k33] (k5) at (\gatephi{0}{1}) {};

    \coordinate (k6) at (\gate{1}{2});
    \node[k33] (k7) at (\gate{1}{1}) {};
    \coordinate (k8) at (\gate{1}{0});

    \node[k33] (k9) at (phi0) {};

    \draw[k33] (k1) -- (k0) -- (k3);
    \draw[k33] (k1) -- (k4);
    \draw[k33] (k1) -- (k2) -- (k5);
    \draw[k33] (k7) -- (k6) -- (k3);
    \draw[k33] (k7) -- (k4);
    \draw[k33] (k7) -- (k8) -- (k5);
    \draw[k33] (k3) edge[bend left] (k9);
    \draw[k33] (k4) edge[bend left] (k9);
    \draw[k33] (k5) edge[bend left] (k9);

  \end{tikzpicture}}
  \caption{The circuit for the bounded formula $\phi \buntil{3} \psi$. Since the red colored
  subgraph is a $K_{3,3}$, the circuit has no planar embedding. However, if the
  $\phi_i$-gates are constants, then propagating the constants 
  eliminates the edges that prevent the shown embedding from being planar. 
} \label{fig:non-planar}
\end{center}
\end{figure}%

The key insight of the construction for the extended logic is that, although the
circuit for the bounded operators is not planar \emph{a priori}, an
\emph{equivalent} planar circuit can be constructed \emph{as soon as one of the
direct subformulas has been evaluated.}  
Suppose, for example, that the $\phi_i$-gates in the circuit shown in
Figure~\ref{fig:non-planar} are constants. Propagating these constants
\emph{eliminates} all edges that prevent the shown embedding from being planar!
In general, simple propagation is not enough to make the circuit planar. This is
illustrated in Figure~\ref{fig:planar}, where the same formula is analyzed under
the assumption that the $\psi_i$-gates are constant.  While the propagation of
the constants replaces parts of the circuit (identified by the dotted lines)
with constants, there remain references to $\phi_i$-gates, e.g., the two
references to $\phi_2$, that prevent the shown embedding from being planar.
However, an equivalent planar circuit exists: This circuit, shown in
Figure~\ref{fig:planar} as a gray overlay, replaces the disturbing references to
the $\phi_i$-gates by vertical edges to subcircuits. For example, the first
occurrence of $\phi_2$ in $\phi_2 \wedge (0 \vee (\phi_2 \wedge (0 \vee (\psi_3
\wedge 1))$ is replaced with an edge to the subcircuit $\phi_2 \wedge (0 \vee
(\psi_3 \wedge 1))$. The resulting circuit is equivalent, because the additional
conjunct is redundant. 

\begin{figure}[t]
  \begin{center}

    \newcommand{\mj}{7}
    \newcommand{\mk}{3}
    \newcommand{\gate}[2]{j{#1}k{#2}}
    \newcommand{\gatephi}[2]{phi{#1}x{#2}}
    \newcommand{\gatepsi}[1]{psi#1}

    \newcommand{\andgate}[2]{$\wedge$}
    \newcommand{\orgate}[2]{$\vee$}
    \newcommand{\idgate}[2]{id}

    \scalebox{1}{%
  \begin{tikzpicture}[font=\small,scale=1.8]

    \foreach \k in {\mk,...,0} {
    \pgfmathtruncatemacro{\start}{(\mk-\k)}
   
    \foreach \j in {0,...,\mj} {

    \ifthenelse { \k=0 \OR \j=\mj} 
    {\node (\gate{\j}{\k}) at ($(\j,\k)$) {\idgate{\j}{\k}};}
    {\node (\gate{\j}{\k}) at ($(\j,\k)$) {\orgate{\j}{\k}};}
    \ifnum \k=0 
      \node[color=gray] (\gatepsi{\j}) at (\j,-0.5) {$\psi_{\j}$};
      \draw[<-] (\gatepsi{\j}) -- (\gate{\j}{\k});
    \fi


    \ifnum \k<\mk
    \pgfmathtruncatemacro{\yyy}{\k+1}
      \draw[<-] (\gate{\j}{\k}) -- (\gate{\j}{\yyy});
      \ifnum \j>0
        \pgfmathtruncatemacro{\xxx}{\j-1}
        \node (\gatephi{\xxx}{\yyy}) at ($(\gate{\xxx}{\yyy})!0.5!(\gate{\j}{\k})$) {\andgate{\xxx}{\yyy}};
        \draw[<-] (\gate{\j}{\k}) -- (\gatephi{\xxx}{\yyy});
        \draw[<-](\gatephi{\xxx}{\yyy}) -- (\gate{\xxx}{\yyy});
        \draw[<-] (\gatephi{\xxx}{\yyy}) -- (\gate{\xxx}{\yyy});
        \draw[->] (\gatephi{\xxx}{\yyy}) -- +(0,-0.2) node[anchor=north,gray] (phiin\xxx\yyy) {$\phi_\xxx$};
      \fi
    \fi
    }}

    \foreach \j/\v in {0/0,1/0,2/0,3/0,4/1,5/0,6/1,7/0} {
      \node[color=gray,font=\large, ultra thick] at (\gatepsi{\j}.south) {$\mathbf\v$};
    }
    \tikzstyle{group}=[draw=gray,dashed,line width=3pt,rounded corners=15pt, draw opacity=0.7]

    \draw[group] ($(\gatephi{4}{1})+(-.25,0)$) -- ($(\gatephi{4}{1})+(.25,.25)$)
              -- ($(\gate{5}{0})+(.25,0)$) -- ($(\gate{5}{0})+(-.25,-.25)$) 
              -- cycle;
    \draw[group] ($(\gate{4}{3})+(-.25,.25)$) -- ($(\gate{4}{3})+(.25,.25)$) 
              -- ($(\gate{4}{0})+(.25,-.25)$) -- ($(\gate{4}{0})+(-.25,-.25)$) 
              -- cycle;
    \draw[group] ($(\gate{6}{3})+(-.25,.25)$) -- ($(\gate{6}{3})+(.25,.25)$)
              -- ($(\gate{6}{1})+(.25,0)$) -- ($(\gatephi{6}{1})+(-.25,0)$)
              -- ($(\gate{6}{0})+(.25,-.25)$) -- ($(\gate{6}{0})+(-.25,-.25)$) 
              -- cycle;
    \draw[group] ($(\gatephi{6}{3})+(-.25,.25)$) 
              -- ($(\gate{7}{3})+(-.25,.25)$) -- ($(\gate{7}{3})+(.25,.25)$)
              -- ($(\gate{7}{0})+(.25,-.25)$) -- ($(\gate{7}{0})+(-.25,-.25)$) 
              -- ($(\gatephi{6}{1})+(-.25,0)$) 
              -- cycle;
    \draw[group] ($(\gate{0}{3})+(-.25,.25)$) -- ($(\gate{0}{3})+(.25,.25)$) 
              -- ($(\gate{3}{0})+(.25,.25)$) -- ($(\gate{3}{0})+(.25,-.25)$) 
              -- ($(\gate{0}{0})+(-.25,-.25)$) 
              -- cycle;

    \tikzstyle{grouplabel}=[gray,opacity=0.9]          
    \node[grouplabel] at (1,1.4) {\Huge\bf 0};
    \node[grouplabel] at (4,1.4) {\Huge\bf 1};
    \node[grouplabel] at (6,1.4) {\Huge\bf 1};
    \node[grouplabel] at (7,1.4) {\Huge\bf 0};
    \node[grouplabel] at (4.8,0.3) {\Huge\bf 0};

    \tikzstyle{result}=[cap=round,join=round,color=gray,font=\large, line width=3pt, draw opacity=0.7]
    \draw[result] (\gate{1}{3}) -- (\gatephi{1}{3});
    \draw[result] (\gatephi{1}{3}) -- (phiin13);
    \draw[result] (\gatephi{1}{3}) -- (\gate{2}{2});
    \draw[result] (\gate{2}{2}) -- (\gatephi{2}{2});
    \draw[result] (\gatephi{2}{2}) -- (phiin22);
    \draw[result] (\gatephi{2}{2}) -- (\gate{3}{1});
    \draw[result] (\gate{3}{1}) -- (\gatephi{3}{1});
    \draw[result] (\gatephi{3}{1}) -- (phiin31);
    \draw[result] (\gate{2}{2}) -- (\gate{2}{3});
    \draw[result] (\gatephi{2}{2}) -- (\gatephi{2}{3});
    \draw[result] (\gate{3}{1}) -- (\gate{3}{2});
    \draw[result] (\gate{3}{2}) -- (\gate{3}{3});
    \draw[result] (\gatephi{3}{1}) -- (\gatephi{3}{2});
    \draw[result] (\gatephi{3}{2}) -- (\gatephi{3}{3});

    \draw[result] (\gatephi{4}{2}) -- (\gate{5}{1}) -- (\gatephi{5}{1});
    \draw[result] (\gatephi{4}{2}) -- (\gatephi{4}{3});
    \draw[result] (\gate{5}{1}) -- (\gate{5}{2}) -- (\gate{5}{3});
    \draw[result] (\gatephi{5}{1}) -- (\gatephi{5}{2}) -- (\gatephi{5}{3});
    \draw[result] (\gatephi{4}{2}) -- (phiin42);
    \draw[result] (\gatephi{5}{1}) -- (phiin51);
    
  \end{tikzpicture}}
  \caption{The circuit for the bounded formula $\phi \buntil{3} \psi$ from
  \prettyref{fig:non-planar}.  If the $\psi_i$-gates evaluate to the constants shown in the bottom line, then the 
  circuit depicted as a gray-colored overlay is an equivalent planar circuit.}
  \label{fig:planar}
\end{center}
\end{figure}%

Based on these observations, we present a translation from bounded temporal
formulas to circuits that is guaranteed to produce planar circuits, but requires
that one of the direct subformulas has already been evaluated. To meet this
requirement, our path checking algorithm generates the circuits
\emph{on-the-fly}: a circuit for a subformula $\phi$ is constructed only when
a direct subformula of $\phi$ is already evaluated. In this way, we avoid the
construction of circuits that cannot be evaluated efficiently in parallel. As
in the algorithm for \LTL, we evaluate a fixed portion of the subformulas in
every sequential step and thus terminate in time logarithmic in the size of the
formula (bounds are encoded in $O(1)$) plus the length of the path. We prove the
following result:

\begin{thm2}\label{thm:main} 
  The \BLTL\ path checking problem is in $\ACO(\logDCFL)$.  
\end{thm2}

The remainder of the paper is structured as follows. After
preliminaries in Section~\ref{sec:preliminaries},
Section~\ref{sec:monotonecircuits} discusses the evaluation of
monotone Boolean circuits. We introduce \emph{transducer circuits},
which are circuits with a defined interface of input and output gates,
and show that the composition of two planar transducer circuits can be
computed in $\logDCFL$. In Section~\ref{sec:ltlcircuit}, we describe
the on-the-fly translation of \BLTL-formulas to planar circuits.  In
Section~\ref{sec:contraction}, we present the parallel path checking
algorithm. We conclude with pointers to open questions in
Section~\ref{sec:conclusions}.



\section{Preliminaries}
\label{sec:preliminaries}

\subsection{Linear-Time Temporal Logic}
\noindent We consider linear-time temporal logic (\LTL) with 
the usual finite-path semantics, which includes a weak and a strong
version of the Next operator~\cite{LichtensteinPnueliZuck85}.
Let $P$ be a set of atomic propositions. The \emph{LTL formulas}  are defined inductively as
follows: every atomic proposition $p \in P$ is a formula. If $\phi$ and $\psi$
are formulas, then so are 
\[ \neg \phi, \quad \phi\wedge\psi, \quad \phi\vee\psi, \quad \Next\, \phi,  \quad \WNext\,
\phi, \quad \phi\,\until\,\psi, \quad \mbox{ and}\quad  \phi\release\psi\enspace.\]
The size of a formula $\phi$ is denoted by $\abs{\phi}$.

\LTL\ formulas are evaluated over computation paths. A \emph{path}
$\rho=\rho_{0},\dots,\rho_{n-1}$ is a finite sequence of states where each
\emph{state} $\rho_{i}$ for $i=0,\dots,n-1$ is a valuation $\rho_{i} \in 2^P$ of
the atomic propositions. The \emph{length} of $\rho$ is $n$ and is denoted by
$\abs{\rho}$.  Given an LTL formula $\phi$, a nonempty path $\rho$ satisfies
$\phi$ at position $i$ ($0 \leq i < \abs{\rho}$), denoted by
$(\rho,i)\models\phi$, if one of the following holds:

\begin{iteMize}{$\bullet$}
  \item $\phi\in P$ and $\phi \in \rho_{i}$, 
  \item $\phi = \neg \psi$ and $(\rho,i)\not\models\psi$, 
  \item $\phi=\phi_{l}\wedge\phi_{r}$ and $(\rho,i)\models\phi_{l}$ and $(\rho,i)\models\phi_{r}$,
  \item $\phi=\phi_{l}\vee\phi_{r}$ and $(\rho,i)\models\phi_{l}$ or $(\rho,i)\models\phi_{r}$,
  \item $\phi=\Next\psi$ and $i+1 < \abs{\rho}$ and $(\rho,i+1)\models\psi$,
  \item $\phi=\WNext\psi$ and $i+1=\abs{\rho}$ or $(\rho,i+1)\models\psi$,
  \item $\phi=\phi_{l} \until \phi_{r}$ and $\exists j, i \leq j < \abs{\rho} \qdot (\rho,j) \models \phi_{r}$ and $\forall k, i \leq k < j\qdot (\rho,k) \models \phi_{l}$, or
  \item $\phi=\phi_{l} \release \phi_{r}$ and $\forall j, i \leq j < \abs{\rho}\qdot (\rho,j) \models \phi_r$ or $\exists k, i \leq k < j \qdot (\rho,k) \models \phi_{l}$.
\end{iteMize}

\noindent
An LTL formula $\phi$ is \emph{satisfied} by a nonempty path $\rho$ (denoted by $\rho \models \phi)$ iff $(\rho,0) \models \phi$.
By $\phi(\rho)$ we denote the Boolean
sequence $s \in \Bool^{\abs{\rho}}$ with $s_i = 1$ if and only if $(\rho,i) \models
\phi$ for $0 \leq i < \abs{\rho}$.

An \LTL\ formula $\phi$ is said to be in \emph{positive normal form} if in
$\phi$ only atomic propositions appear in the scope of the symbol $\neg$. 
The following dualities ensure that each \LTL\ formula $\phi$ can be rewritten
into a formula $\phi'$ in positive normal form with $\abs{\phi'} =
O(\abs{\phi})$.

\begin{align*}
  \neg \neg \phi\, &\equiv\, \phi\enspace;\\
  \neg \WNext \phi\, &\equiv\, \Next \neg \phi\enspace;\\
  \neg (\phi_{l} \wedge \phi_{r}) \, &\equiv\, (\neg \phi_{l}) \vee (\neg \phi_{r})\enspace;\\
  \neg (\phi_{l} \until \phi_{r}) \, &\equiv\, (\neg \phi_{l}) \release (\neg \phi_{r})\enspace.
\end{align*}

\noindent 
The semantics of \LTL\ implies the \emph{expansion laws}, which relate the
satisfaction of a temporal formula in some position of the path to the
satisfaction of the formula in the next position and the satisfaction of its
subformulas in the present position:

\begin{align*}
\phi_{l} \until \phi_{r}\, &\equiv\, \phi_r \vee (\phi_l \wedge \Next\,(\phi_{l} \until \phi_{r}))\enspace;\\
\phi_{l} \release \phi_{r}\, &\equiv\, \phi_r \wedge (\phi_l \vee \WNext\,(\phi_{l} \release \phi_{r}))\enspace.
\end{align*}

\noindent 
We now extend \LTL\ with the \emph{past-time operators} $\pNext$ (strong
Yesterday), $\pWNext$ (weak Yesterday), $\puntil$ (Since), and $\prelease$
(Trigger) with the following semantics:

\begin{iteMize}{$\bullet$}
  \item$(\rho,i) \models \pNext\psi$ iff 
    \begin{equation*}
      i-1 \geq 0 \text{ and } (\rho,i-1)\models\psi\enspace,
    \end{equation*}
  \item$(\rho,i) \models \pWNext\psi$ iff 
    \begin{equation*}
      i-1 < 0 \text{ or } (\rho,i-1)\models\psi\enspace,
    \end{equation*}
  \item$(\rho,i) \models \phi_{l} \puntil \phi_{r}$ iff 
    \begin{equation*}
      \exists j, i \geq j \geq 0 \qdot (\rho,j) \models \phi_{r} \text{ and } \forall k, i \geq k > j \qdot (\rho,k) \models \phi_{l}\enspace, \text{ and}
    \end{equation*}
  \item$(\rho,i) \models \phi_{l} \prelease \phi_{r}$ iff 
    \begin{equation*}
      \forall j, i \geq j \geq 0 \qdot (\rho,j) \models \phi_r \text{ or } \exists k, i \geq k > j \qdot (\rho,k) \models \phi_{l}\enspace.
    \end{equation*}
\end{iteMize}
\noindent
We call the resulting logic \emph{linear-time temporal logic with past (\PLTL)}.
The following dualities ensure that each \PLTL\ formula $\phi$ can be rewritten
into a formula $\phi'$ in positive normal form with $\abs{\phi'} =
O(\abs{\phi})$.

\begin{align*}
  \neg \pWNext \phi\, &\equiv\, \pNext \neg \phi\enspace;\\
  \neg (\phi_{l} \puntil \phi_{r}) \, &\equiv\, (\neg \phi_{l}) \prelease (\neg \phi_{r})\enspace.
\end{align*}

\noindent The expansion laws for the past operators are
\begin{align*}
  \phi_{l} \puntil \phi_{r}\, &\equiv\, \phi_r \vee (\phi_l \wedge \pNext\,(\phi_{l} \puntil \phi_{r}))\enspace;\\
  \phi_{l} \prelease \phi_{r}\, &\equiv\, \phi_r \wedge (\phi_l \vee \pWNext\,(\phi_{l} \prelease \phi_{r}))\enspace.
\end{align*}

\noindent To obtain \emph{linear-time temporal logic with past and bounds
(\BLTL)} we further add the bounded temporal operators $\buntil{b}$,
$\brelease{b}$, $\pbuntil{b}$, and $\pbreleaseNoblank{b}$, where $b \in \Nat$ is any
natural number. (For technical reasons, the size of a formula is defined using
unary encoding for the bounds. However, our results are actually indepent of
the encoding of the bounds.) The semantics of the bounded operators is defined as
follows:

\begin{iteMize}{$\bullet$}
  \item $(\rho,i) \models \phi_{l} \buntil{b} \phi_{r}$ iff 
    \begin{equation*}
      \exists j, i \leq j \leq \min(i+b,\abs{\rho}-1) \qdot (\rho,j) \models \phi_{r} \text{ and } \forall k, i \leq k < j \qdot (\rho,k) \models \phi_{l}\enspace, 
    \end{equation*}
  \item $(\rho,i) \models \phi_{l} \brelease{b} \phi_{r}$ iff 
    \begin{equation*}
      \forall j, i \leq j \leq \min(i+b,\abs{\rho}-1) \qdot (\rho,j) \models \phi_r \text{ or } \exists k, i \leq k < j \qdot (\rho,k) \models \phi_{l}\enspace,
    \end{equation*}
  \item $(\rho,i) \models \phi_{l} \pbuntil{b} \phi_{r}$ iff 
    \begin{equation*}
      \exists j, i \geq j \geq \max(i-b,0) \qdot (\rho,j) \models \phi_{r} \text{ and } \forall k, i \geq k > j \qdot (\rho,k) \models \phi_{l}\enspace, \text{ and}
    \end{equation*}
  \item $(\rho,i) \models \phi_{l} \pbrelease{b} \phi_{r}$ iff 
    \begin{equation*}
      \forall j, i \geq j \geq \max(i-b,0) \qdot (\rho,j) \models \phi_r \text{ or } \exists k, i \geq k > j \qdot (\rho,k) \models \phi_{l}\enspace. 
    \end{equation*}
\end{iteMize}

\noindent The following dualities apply to the \BLTL\ operators:
\begin{align*}
  \neg (\phi_{l} \buntil{b} \phi_r) \, &\equiv\, (\neg \phi_l) \brelease (\neg \phi_r)\enspace;\\
  \neg (\phi_{l} \pbuntil{b} \phi_r) \, &\equiv\, (\neg \phi_l) \pbrelease (\neg \phi_r)\enspace.
\end{align*}
\noindent The expansion laws for the bounded operators are defined as follows for $b \in \Nat$:
\begin{align*}
&\phi_{l} \buntil{b} \phi_{r}\,\equiv 
  \begin{cases}
    \phi_r \vee (\phi_l \wedge \Next\,(\phi_{l} \buntil{b-1} \phi_{r})) &\text{for } b > 0,\\
    \phi_r &\text{for } b = 0,
  \end{cases}\\
&\phi_{l} \brelease{b} \phi_{r}\,\equiv 
  \begin{cases}
    \phi_r \wedge (\phi_l \vee \WNext\,(\phi_{l} \brelease{b-1} \phi_{r})) &\text{for } b > 0,\\
    \phi_r &\text{for } b = 0,
  \end{cases}\\
&\phi_{l} \pbuntil{b} \phi_{r}\,\equiv 
  \begin{cases}
    \phi_r \vee (\phi_l \wedge \pNext\,(\phi_{l} \pbuntil{b-1} \phi_{r})) &\text{for } b > 0,\\
    \phi_r &\text{for } b = 0, \text{ and }
  \end{cases}\\
&\phi_{l} \pbrelease{b} \phi_{r}\,\equiv 
  \begin{cases}
    \phi_r \wedge (\phi_l \vee \pWNext\,(\phi_{l} \pbrelease{b-1} \phi_{r})) &\text{for } b > 0,\\
    \phi_r, &\text{for } b = 0.
  \end{cases}
\end{align*}

\noindent
We are interested in determining if a formula is satisfied by a given
path. This is the path checking problem.

\begin{defi}[Path Checking Problem]
The path checking problem for \LTL\ (\PLTL, \BLTL) is to decide, for an \LTL\
(\PLTL, \BLTL) formula $\phi$ and a nonempty path $\rho$, whether $\rho\models\phi$.
\end{defi}

Later in this paper we will present a path checking algorithm for
\BLTL. 
The algorithm constructs a circuit that is of polynomial size in the length of
the input computation path and in the size of the input
formula including the sum of the bounds.
However, we do not want the complexity of the algorithm to depend on the
encoding of the bounds.
The following lemma allows us to prune the size of the bounds that occur in a
\BLTL\ formula to the length of the computation path.

\begin{lem}\label{lem:bltlprune}\sloppy
  Given a \BLTL\ formula $\phi$ and a finite computation path $\rho$, the \BLTL\ formula
  $\phi'$ is obtained from $\phi$ by setting each bound $n$ in $\phi$ to
  $\min(n,\abs{\rho})$. It holds that $\rho \models \phi$ if and only if $\rho
  \models \phi'$.
\end{lem}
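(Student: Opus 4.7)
I would prove a strengthening by structural induction on $\phi$: for every position $i$ with $0 \leq i < \abs{\rho}$, $(\rho,i) \models \phi$ iff $(\rho,i) \models \phi'$. The lemma then follows by instantiating $i = 0$. Here $\phi'$ is defined by the same induction, replacing each bound $b$ in $\phi$ by $b' := \min(b,\abs{\rho})$ while leaving the formula structure otherwise untouched.

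The base case ($\phi \in P$) is immediate, and the inductive cases for the Boolean connectives, the (weak/strong) Next and Yesterday operators, and the unbounded $\until$, $\release$, $\puntil$, $\prelease$ operators are also immediate, since none of these modify the bound and the outer semantics refers only to truth values at positions inside $\abs{\rho}$ that are covered by the induction hypothesis.

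The only nontrivial cases are the four bounded operators. Here the key observation is that the quantifier range in the semantics already clips $b$ to the path by applying $\min(i+b,\abs{\rho}-1)$ (for future-bounded operators) or $\max(i-b,0)$ (for past-bounded operators), and this clipping is insensitive to replacing $b$ by $b'$. Concretely, if $b \leq \abs{\rho}$ then $b' = b$ and there is nothing to show; if $b > \abs{\rho}$ then $b' = \abs{\rho}$, so for the future case $i+b \geq i+b' = i+\abs{\rho} \geq \abs{\rho}$ (using $i \geq 0$), hence $\min(i+b,\abs{\rho}-1) = \min(i+b',\abs{\rho}-1) = \abs{\rho}-1$, and symmetrically for the past case $i-b \leq i-b' = i-\abs{\rho} \leq -1$ (using $i < \abs{\rho}$), hence $\max(i-b,0) = \max(i-b',0) = 0$. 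Thus the existential/universal quantifier ranges over exactly the same set of positions $j$ (and $k$) in $\phi$ and in $\phi'$, and the inductive hypothesis applied to $\phi_l$ and $\phi_r$ at each such position closes the argument.

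There is no real obstacle; the only thing to be careful about is to state the induction hypothesis uniformly over all valid positions $i < \abs{\rho}$ (rather than only $i=0$), so that the recursive calls made by the bounded operators' semantics on subformulas at arbitrary positions $j, k \in [0, \abs{\rho})$ are already covered.
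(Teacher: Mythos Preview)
Your proposal is correct and takes exactly the approach the paper intends: the paper's proof is the single line ``By induction over $\phi$,'' and your argument is a faithful, fully spelled-out instance of that induction, including the necessary strengthening of the hypothesis to all positions $i$ and the only nontrivial observation that the semantic clipping $\min(i+b,\abs{\rho}-1)$ (resp.\ $\max(i-b,0)$) already absorbs the replacement of $b$ by $\min(b,\abs{\rho})$.
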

\begin{proof}
  By induction over $\phi$.
\end{proof}

\subsection{Complexity classes within \texorpdfstring{\poly}{\textsf{P}}.}
We assume familiarity with the standard complexity classes within \poly.
$\NC$ is the class of decision problems decidable in polylogarithmic time on a parallel computer with a polynomial number of processors.
\logspace~is the class of problems that can be decided by a logspace
restricted deterministic Turing machine. 
\logDCFL\ is the class of problems that can be decided by a logspace and
polynomial time restricted deterministic Turing machine that is additionally equipped with a
stack. 
$\AC^i, i \in \Nat$, denotes the class of problems decidable by polynomial size
unbounded fan-in Boolean circuits of polylogarithmic depth of degree $i$. \AC\ is defined as
$\bigcup_{i\in\Nat}\AC^i$.  Throughout the paper, all circuits are assumed to be
uniform. Often we use functional versions of complexity classes.
Since in our case the output size of the functions is always polynomially
bounded we can use a polynomial number of circuits for the corresponding class
of decision problems, each for computing a single bit of the output. 
Thus, in the following we do not explicitly distinguish between decision
problems and functional problems \cite{Johnson90}.  
It holds that 
\begin{equation*}
\AC^0 \subsetneq \logspace \subseteq \logDCFL \subseteq \AC^1 \subseteq \AC^2
  \subseteq \dots \subseteq \AC=\NC \subseteq \poly\enspace.
\end{equation*}

\noindent Further details can be found in the survey paper by Johnson \cite{Johnson90}.

Given a problem $P$ and a complexity class $C$, $P$ is $\AC^1$ Turing reducible
to $C$ (denoted as $P \in \AC^1(C)$) if there is a family of $\AC^1$ circuits
with additional unbounded fan-in $C$-oracle gates that decides $P$. 
It holds that
\begin{equation*}
\AC^1 \subseteq \AC^1(\logDCFL) \subseteq \AC^2\enspace.
\end{equation*}

\noindent For further details on $\ACO$ reductions, we refer to \cite{Vollmer99}.

\subsection{Parallel Tree Contraction}

The path checking algorithm presented in this paper relies on efficient
parallel tree contraction. Here we follow the approach of \cite{AbrahamsonEtAl89}
and \cite{KosarajuDelcher88}. 
%
%
%
%
%
A rooted binary tree is called \emph{regular} if all inner nodes have exactly two
children.
Let $\T_0=\langle V_0, E_0 \rangle$ be an ordered, rooted, regular, binary tree. 
A contraction step on $\T_i$ takes a leaf $l$ of $\T_i$, its sibling $s$,
and its parent $p$ and contracts these nodes into a single node $s'$ in the tree
$\T_{i+1} = \langle V_{i+1}, E_{i+1} \rangle$ with
\begin{align*}
  V_{i+1} &= \left(V_i \setminus \{l,p\}\right), \text{ and }\\
  E_{i+1} &=  \begin{cases}
    E_i \setminus \{\langle p,l \rangle,\langle p,s\rangle\} \quad&\text{if $p$ is the root of $\T_i$},\\
    \left(E_i \setminus \{\langle p,l \rangle,\langle p,s \rangle,
  \langle q,p \rangle \}\right)\cup \{ \langle q,s\rangle \}, \text{$q \in V_i$ is parent of $p$} \quad &\text{otherwise}.
  \end{cases}
\end{align*}
Using the fact that a contraction step is a local operation it is
possible to perform contraction steps in parallel on non-overlapping subtrees.

A tree contraction on an ordered, rooted, regular, binary tree $T$ is a process that
iteratively applies contraction steps on the tree $T$ until it is contracted
into a singleton tree.
\prettyref{alg:contraction} from \cite{KarpRamachandran90} performs a tree
contraction in $\ceil{\log n}$ stages of parallel contraction steps.

\newcommand\Input{\mbox{\textbf{Input:}}}
\newcommand\Effect{\mbox{\textbf{Effect:}}}

\begin{algo}\hfill\\
\Input~an ordered, rooted, regular, binary tree $T$ with $n$ leaves.\\
\Effect~contracts $T$ into a singleton tree.
\label{alg:contraction}
\begin{algorithmic}
  \STATE Number the leaves in order from left to right as $1,\dots,n$.
  \FOR{$\ceil{\log n}$ iterations}
    \STATE Apply the contraction step to all odd numbered leaves that are the
    left child of their parent.
    \STATE Apply the contraction step to all odd numbered leaves that are the
    right child of their parent.
    \STATE Update the numbering of the remaining (even numbered) leaves by
    dividing each leaf number by two.
  \ENDFOR
\end{algorithmic}
\end{algo}

\noindent The algorithm can be implemented on an \emph{exclusive read exclusive
write random access memory machine (EREW PRAM)} such that it runs in
time $O(\log n)$ with a total work of $O(n)$ \cite{KarpRamachandran90}. 
It is well known that problems that can be solved on an EREW PRAM in time
$O(\log n)$ with polynomial total work are contained in $\ACO$ \cite{Vollmer99}. 
\prettyref{fig:contraction} shows a tree contraction process for an
example tree.

\begin{figure}
\tikzstyle{cc}=[>=latex,inner sep=0pt,outer sep=0pt]

\tikzstyle{cirtree}=[ultra thick,color=black,-stealth reversed,shorten <=4pt,line cap=rounded]
\tikzstyle{cirleaf}=[-]
\tikzstyle{cirleafn}=[inner sep=.5pt,outer sep=0pt,draw,circle,font=\sf\footnotesize,text=white,fill]
\tikzstyle{ctree}=[]


\scalebox{0.8}{%
\begin{tikzpicture}[baseline=0,level/.style={level distance=1cm-#1,sibling distance=4cm/#1},level 2/.style={sibling distance=2.2cm}]
  \coordinate (r) [cirtree]
    child {coordinate (n)
    child {coordinate (n0)
      child [cirleaf] {node[cirleafn] (n00) {1}}
        child {coordinate (n01)
          child [cirleaf,black] {node[cirleafn] (n010) {2}}
          child [cirleaf,black] {node[cirleafn] (n011) {3}}
        }
      }
      child {coordinate (n1)
        child [cirleaf] {node[cirleafn] (n10) {4}}
        child [cirleaf] {node[cirleafn] (n11) {5}}
      }
    };
\end{tikzpicture}\hspace{.5cm}
\begin{tikzpicture}[baseline=.5cm,level/.style={level distance=1cm-#1,sibling distance=4cm/#1},level 2/.style={sibling distance=2.2cm}]
  \coordinate (r) [cirtree]
    child {coordinate (n)
      child {coordinate (n01)
          child [cirleaf,black] {node[cirleafn] (n010) {2}}
          child [cirleaf,black] {node[cirleafn] (n011) {3}}
      }
      child [-] {coordinate (n1)
        child [cirleaf] {node[cirleafn] (n10) {4}}
        child [cirleaf] {node[cirleafn] (n11) {5}}
      }
    };
\end{tikzpicture}\hspace{.5cm}
%
%
%
\begin{tikzpicture}[baseline=1cm,level/.style={level distance=1cm-#1,sibling distance=4cm/#1},level 2/.style={sibling distance=2.2cm}]
   \coordinate (r) [cirtree]
     child {coordinate (n)
         child [cirleaf,black] {node[cirleafn] (n010) {2}}
         child [cirleaf] {node[cirleafn] (n10) {4}}
     };
\end{tikzpicture}\hspace{.5cm}
\begin{tikzpicture}[baseline=1cm,level/.style={level distance=1cm-#1,sibling distance=4cm/#1},level 2/.style={sibling distance=2.2cm}]
  \coordinate (r) [cirtree]
    child {coordinate (n)
        child [cirleaf,black] {node[cirleafn] (n010) {1}}
        child [cirleaf] {node[cirleafn] (n10) {2}}
    };
\end{tikzpicture}\hspace{.5cm}
\begin{tikzpicture}[baseline=1.5cm,level/.style={level distance=1cm-#1,sibling distance=4cm/#1},level 2/.style={sibling distance=2.2cm}]
  \coordinate (r) [cirtree]
  child [cirleaf] {node[cirleafn] (n10) {2}};
\end{tikzpicture}}
\caption{An parallel contraction process as produced by \prettyref{alg:contraction}.}
\label{fig:contraction}
\end{figure}


In order to use the parallel tree contraction algorithm to compute some function
$f$ on a labeled tree, the contraction step is piggybacked with a local
operation on the labels of the nodes involved in the contraction step. 
In order for $f$ to be in $\ACO$, the individual contraction steps must be performed
in constant time.
%
%
For our constructions this is not the case. 
However, by piggybacking the contraction step with operations that for some
complexity class $\mathcal{C}$ are solvable with $\mathcal{C}$-oracle gates, the
problem of computing $f$ is $\ACO$-reduced to $\mathcal{C}$. 
Hence, by showing that the complexity of the contraction step is in $\mathcal{C}$, the
overall complexity of $f$ is proven to be in $\ACO(\mathcal{C})$.

\section{Monotone Boolean Circuits}
\label{sec:monotonecircuits}

\noindent
A \emph{monotone Boolean circuit} $\langle \Gamma, \gamma \rangle$ consists of a
set $\Gamma$ of \emph{gates} and a gate labeling $\gamma$.  The \emph{gate
labeling} labels each gate either with a Boolean value, with the symbol $\gvar$, with a
tuple $\langle \gop,\gleft,\gright\rangle$, or with a tuple $\langle \gaid,\gsuc
\rangle$, where $\gop\in\{\gand,\gor\}$, and $\gleft$, $\gright$, and $\gsuc$
are gates.

A gate that is labeled with a Boolean value is called a \emph{constant gate}.  A
gate that is labeled with $?$ is called a \emph{variable gate}.  For a
non-constant, non-variable gate $a$ labeled with $\langle \gop,b,c\rangle$ or
$\langle \gaid,b \rangle$, we say that $a$ \emph{directly depends} on $b$ and
$c$, denoted by $a \ddep b$, $a \ddep c$. The \emph{dependence} relation is the
transitive closure of $\ddep$.  A gate on which no other gate depends  is called
a \emph{sink gate}. \emph{A circuit must not contain any cyclic dependencies.}
 
For a circuit  $G = \langle \Gamma, \gamma \rangle$, $\constants(G)$
denotes the set of all constant gates in $\Gamma$. If $\Gamma =
\constants(G)$, we call $G$ \emph{constant}. By $\vars(G)$ the set of all
variable gates of $\Gamma$ is denoted. Finally we define $\inputs(G)$ to be the set
of all variable gates and all constant gates that are not sink gates in
$\Gamma$.
In the following, we assume that all circuits are monotone Boolean circuits. We
omit the labeling whenever it is clear from the context and identify the
circuit with its set of gates.  

\subsection{Circuit evaluation}
The \emph{evaluation} of a circuit $\langle \Gamma, \gamma \rangle$ is the
  (unique) circuit $\langle \Gamma, \gamma' \rangle$ where for each gate $g \in
  \Gamma$ the following holds:
  \begin{iteMize}{$\bullet$}
    \item $\gamma'(g) = 0$ iff $\gamma(g) = \langle and,l,r\rangle$ and
      $\gamma'(l)=0$ or $\gamma'(r)=0$,
    \item $\gamma'(g) = 1$ iff $\gamma(g) = \langle and,l,r\rangle$ and
      $\gamma'(l)=1$ and $\gamma'(r)=1$,
    \item $\gamma'(g) = \langle id,l \rangle$ iff $\gamma(g) = \langle
      and,l,r\rangle$ and $\gamma'(l)\not\in \{0,1\}$ and $\gamma'(r) = 1$,
    \item $\gamma'(g) = \langle id,r \rangle$ iff $\gamma(g) = \langle
      and,l,r\rangle$ and $\gamma'(r)\not\in \{0,1\}$ and $\gamma'(l) = 1$,
    \item $\gamma'(g) = 0$ iff $\gamma(g) = \langle or,l,r\rangle$ and
      $\gamma'(l)=0$ and $\gamma'(r)=0$,
    \item $\gamma'(g) = 1$ iff $\gamma(g) = \langle or,l,r\rangle$ and
      $\gamma'(l)=1$ or $\gamma'(r)=1$,
    \item $\gamma'(g) = \langle id,l\rangle$ iff $\gamma(g) = \langle
      or,l,r\rangle$ and $\gamma'(l)\not\in \{0,1\}$ and $\gamma'(r) = 0$,
    \item $\gamma'(g) = \langle id,r\rangle$ iff $\gamma(g) = \langle
      or,l,r\rangle$ and $\gamma'(r)\not\in \{0,1\}$ and $\gamma'(l) = 0$,
    \item $\gamma'(g) = \gamma'(s)$ iff $\gamma(g) = \langle id,s\rangle$ and
      $\gamma'(s) \in \{0,1\}$, and
    \item $\gamma'(g)=\gamma(g)$ otherwise.
  \end{iteMize}
  A circuit is \emph{evaluated} if all constant gates are sink gates.
In an evaluated circuit, all gates that do not depend on variable gates are
  constant.  Hence, a circuit without any variable gates evaluates to a constant
  circuit; for a circuit that contains variable gates, a subset of the gates is
  relabeled: some $\gand$-/$\gor$-/$\gaid$-gates are labeled as constant or
  $\gaid$-gates.  

  The problem of evaluating monotone planar circuits has been studied
  extensively in the literature
  \cite{Goldschlager80,DymondCook89,Kosaraju90,BarringtonLuMiltersenSkyum99,LimayeMahajanJayalalSarma06,ChakrabortyDatta06}.
  Our construction is based on the evaluation of one-input-face planar circuits:
Given a circuit $G=\langle \Gamma, \gamma \rangle$ with variable gates $X$, the
\emph{graph} $\gr(G)$ \emph{of} $G$ is the directed graph $\langle \Gamma
,E\rangle$, where $E = \{\langle a,b\rangle \in \Gamma \times \Gamma \mid a
\ddep b\}$.  A circuit~$C$ is \emph{planar} if there exists a planar embedding
of the graph of $C$. A planar circuit $G$ is \emph{one-input-face} if there is a
planar embedding such that all gates of $\inputs(G)$ are located on the outer face.
%
Note that
an evaluated planar circuit with all variable gates on the outer face is one-input-face. 
The evaluation of one-input-face planar circuits can be parallelized efficiently.  We make
use of a result by Chakraborty and Datta~\cite{ChakrabortyDatta06}:

\begin{thm}[Chakraborty and Datta 2006]\label{thm:ChakrabortyDatta2006} 
  The problem of evaluating an one-input-face planar circuit without variable gates is in \logDCFL.
\end{thm}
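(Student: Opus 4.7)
The plan is to exploit the one-input-face structure to design a deterministic auxiliary pushdown automaton that runs in logspace and polynomial time, since this is precisely the machine characterization of $\logDCFL$. I would fix a planar embedding witnessing the one-input-face property; this embedding induces a canonical left/right orientation of the two predecessor edges at every $\gand$- and $\gor$-gate, as well as a cyclic ordering of the gates in $\inputs(G)$ around the outer face.

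The core algorithm is a depth-first, stack-based evaluation of the circuit starting at the designated output gate. At an $\gand$-gate the machine descends first into the left predecessor; on return with a value $v_\ell$, it either pops returning $0$ (if $v_\ell = 0$) or descends into the right predecessor. Dually, at an $\gor$-gate it descends first into the left predecessor and short-circuits returning $1$ as soon as the left subevaluation yields $1$. Constant gates return immediately. The stack stores only the current root-to-gate path together with a single short-circuit bit per frame, so its height and the worktape both fit in $O(\log n)$ bits per frame.

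The crucial combinatorial step, and the main obstacle, is to bound the running time of this naive tree-unfolding of a DAG by a polynomial. In general the same sub-DAG can be visited exponentially often, which would normally force memoization and push the problem up to $\logCFL$. The key use of one-input-face planarity is that the ``subinterval of outer-face inputs that a given gate depends on'' forms a nested (non-crossing) family: if two evaluation paths arrived at the same gate through different ancestors, the corresponding arcs in the planar embedding would have to cross, contradicting planarity together with the one-input-face assumption. A charging argument against the outer-face input intervals then bounds the total number of frames pushed by the DFS by a polynomial in $|\Gamma|$. Combining this with the $O(\log n)$ worktape bound yields a deterministic auxiliary pushdown automaton in logspace and polynomial time, placing the problem in $\logDCFL$; the delicate planarity-based charging is the step I would expect to take most of the technical work.
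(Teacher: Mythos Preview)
First, a contextual point: the paper does not prove this theorem at all. It is quoted as a black-box result of Chakraborty and Datta and is immediately used to derive \prettyref{cor:oifpeceval}. So there is no ``paper's own proof'' to compare against; any comparison must be against the actual Chakraborty--Datta argument.

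More importantly, your sketch contains a genuine gap at the exact place you flag as the ``main obstacle.'' The short-circuiting DFS you describe can run in time exponential in the size of the circuit even when the circuit is monotone, planar, and one-input-face. Consider a chain of diamonds: gates $g_0,\dots,g_n$ and, for each $i<n$, two $\gaid$-gates $a_i,b_i$ with $g_i=\langle\gor,a_i,b_i\rangle$ (or $\gand$), $a_i=\langle\gaid,g_{i+1}\rangle$, $b_i=\langle\gaid,g_{i+1}\rangle$, and $g_n$ the sole constant. This circuit is planar with the single constant $g_n$ on the outer face, hence one-input-face. If $g_n$ is the non-absorbing value (e.g.\ $0$ for $\gor$), no short-circuit ever fires, and your DFS visits $g_n$ exactly $2^n$ times. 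The nested-interval intuition does not save you: every gate depends on the same singleton interval $\{g_n\}$, so the intervals are trivially nested, yet the running time is exponential. Your sentence ``if two evaluation paths arrived at the same gate through different ancestors, the corresponding arcs would have to cross'' is simply false---the two sides of a diamond are two non-crossing root-to-gate paths with a common endpoint.

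The actual $\logDCFL$ algorithms for one-input-face monotone planar circuits (Dymond--Cook, Barrington et al., Chakraborty--Datta) do not perform a DFS on the circuit DAG. They exploit planarity in a much more global way, typically by sweeping the circuit in layers cut off from the input face and maintaining, on the stack, a compact encoding of the partially evaluated frontier; planarity guarantees that updates to this frontier are local and stack-like. If you want to reconstruct a proof, that layered-sweep viewpoint is the right starting point, not tree-style DFS with short-circuiting.
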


\noindent
Using standard techniques \cite{Kosaraju90}, the theorem
generalizes to circuits that contain variable gates:

\begin{cor}\label{cor:oifpeceval}
The problem of evaluating an one-input-face planar circuit is in \logDCFL.  
\end{cor}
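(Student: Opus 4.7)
The plan is to reduce the problem to two applications of Theorem~\ref{thm:ChakrabortyDatta2006}, followed by a local, $\ACO$-computable post-processing step. Given a one-input-face planar circuit $G$, I would first construct two variable-free circuits $G_0$ and $G_1$: in $G_0$, every gate in $\vars(G)$ is relabeled by the constant $\zero$, and in $G_1$ by the constant $\one$. This relabeling leaves the underlying graph and the positions of the input gates unchanged, so the planar embedding witnessing the one-input-face property of $G$ carries over to both $G_0$ and $G_1$, which are therefore one-input-face planar circuits without variable gates and fall within the scope of Theorem~\ref{thm:ChakrabortyDatta2006}.

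Next, I would invoke that theorem to compute, for every gate $g$, the Boolean value $v_0(g)$ that $g$ takes in the evaluation of $G_0$ and the value $v_1(g)$ that $g$ takes in the evaluation of $G_1$. Since $G$ is monotone and the $\one$-assignment dominates the $\zero$-assignment on the inputs, $v_0(g) \le v_1(g)$ holds at every gate, so each gate falls into one of only three regimes: $0$-forced ($v_0(g)=v_1(g)=0$), $1$-forced ($v_0(g)=v_1(g)=1$), or variable-dependent ($v_0(g)=0$, $v_1(g)=1$). Producing the two full labelings amounts to polynomially many independent instances of Theorem~\ref{thm:ChakrabortyDatta2006}, one per designated output gate, and by the functional/decisional equivalence adopted in Section~\ref{sec:preliminaries} this step stays within $\logDCFL$.

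Finally, from the pair $(v_0(g),v_1(g))$ together with the original label of $g$ and the analogous pairs for its children, the evaluated label $\gamma'(g)$ can be read off by an $\ACO$-computable case distinction that mirrors the defining clauses of $\gamma'$: if $v_0(g)=v_1(g)$ the answer is that common Boolean value; otherwise $g$ is variable-dependent, and an original $\langle\gand,l,r\rangle$-gate becomes $\langle\gaid,l\rangle$ exactly when $v_0(r)=v_1(r)=1$, becomes $\langle\gaid,r\rangle$ when $v_0(l)=v_1(l)=1$, and otherwise retains the label $\langle\gand,l,r\rangle$; the $\gor$-, $\gaid$- and $\gvar$-cases are dispatched symmetrically from the evaluation rules. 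The single point that needs a moment of care is the transition from Theorem~\ref{thm:ChakrabortyDatta2006}, which is phrased in terms of a single distinguished output, to the simultaneous computation of all gate values demanded here; this is precisely the standard logDCFL-composition argument alluded to by the reference to \cite{Kosaraju90}, and it is the one spot I would double-check against the functional convention from Section~\ref{sec:preliminaries} before declaring the proof complete.
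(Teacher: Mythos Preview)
Your proposal is correct and follows essentially the same approach as the paper: evaluate the circuit twice with all variables set to $\one$ and to $\zero$ respectively (using Theorem~\ref{thm:ChakrabortyDatta2006}), use monotonicity to classify each gate as $0$-forced, $1$-forced, or variable-dependent, and then locally relabel variable-dependent gates to $\gaid$ whenever one child is forced. The paper's proof is terser but encodes precisely the same two-pass monotonicity argument and the same local post-processing.
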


\begin{proof}
We first assign the Boolean constant \one\ to all variable gates. Each gate that
evaluates to \zero\ is turned into a \zero\ constant gate. Next, we assign
\zero\ to all variable gates.  Each gate that evaluates to \one\ is turned into a
constant gate with value \one. Since the values of the remaining gates depend on the
variables, they are simply copied. If one of the latter gates depends on a
constant gate, the dependency is removed by changing such a gate into an
$\gaid$-gate.
\end{proof}
 
\subsection{Transducer Circuits}

\noindent The central construction in our path checking algorithm is circuit composition: 
circuits for larger subformulas are built from circuits for smaller 
subformulas by connecting variable gates of one circuit to gates of another
circuit. To facilitate this operation, we introduce transducer circuits, 
which are circuits with a defined interface of input and output gates that allow
the circuit to transform a sequence of Boolean input values, for example the values of a 
subformula at different positions of the path, into a sequence of output values.

A \emph{transducer circuit} is a tuple $T=\langle \Gamma, \gamma, I, O
\rangle$ where $G=\langle \Gamma, \gamma \rangle$ is a circuit, $I$ is a
(strict) ordering of $\vars(G)$, and $O$ is a (strict) ordering of a subset of $\Gamma$. $I$
is called the input of $T$ and $O$ is called the output of $T$. The \emph{input} and
\emph{output arity} is the length of the input and output, denoted as $\abs{I}$ and $\abs{O}$,
respectively. We denote the $i$\textsuperscript{th} element of $I$ and $O$ by $I(i)$ and $O(i)$, respectively.
The
transducer circuit $T$ is \emph{planar} if $G$ has a planar embedding such that the
gates of $I$ appear counter-clockwise ordered on the outer face, the gates of
$O$ appear clockwise ordered on the outer face, and between any two gates of $I$
on the outer face there are either no or all gates of $O$, i.e., the gates of
$I$ and $O$ do not appear interleaved on the outer face.

Given two planar transducer circuits $G = \langle
\Gamma,\gamma,I_G,O_G\rangle$ and $D = \langle \Delta,\delta,I_D,O_D\rangle$, 
$G$ is \emph{composable with $D$} if the input arity of $D$ equals the output arity of
$G$.
The \emph{composition} $G \ccomp D$ of $G$ with $D$ is the planar transducer circuit $E = \langle
\Epsilon,\epsilon,I_E,O_E \rangle$ with $\Epsilon=\Gamma \udot \Delta$, $I_E =
I_G$, $O_E = O_D$ and 
\begin{align*}
  &\epsilon(g)=
  \begin{cases}
    \gamma(g), &\text{for } g \in \Gamma,\\
    \delta(g), &\text{for } g \in \Delta \setminus \vars(\Delta), \text{ and }
  \end{cases}\\
  &\epsilon(I_D(i)) = \langle \gaid, O_G(i) \rangle, \text{ for } 0 \leq i < \abs{O_G}.
\end{align*}

\noindent
The composition $G\ccomp D$ can be computed by a logspace restricted
deterministic Turing machine. 

A transducer circuit $T$ represents a function $f_T: \Bool^{\abs{I}} \rightarrow \Bool^{\abs{O}}$, 
where $f_T(s)$ for some sequence $s \in  \Bool^{\abs{I}}$  is computed by
evaluating the composition of $T$ with the constant circuit that represents $s$. The values of 
the output gates of the resulting constant circuit define the sequence $f_T(s)$.

\begin{lem}\label{lem:oifccomp}
  For two one-input-face planar transducer circuits $G$ and $D$, such that $G$
  is composable with
  $D$, the evaluation of $G \ccomp D$ is an evaluated one-input-face planar transducer circuit
  and can be computed within $\logDCFL$.
\end{lem}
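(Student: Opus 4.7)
The plan is to split the lemma into a structural claim (the resulting circuit is a one-input-face planar transducer) and a complexity claim ($\logDCFL$), and then to reduce the algorithmic part to \prettyref{cor:oifpeceval}.

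First I would argue that $G \ccomp D$ is one-input-face planar by a gluing of embeddings. Take planar embeddings of $G$ and $D$ witnessing the one-input-face property. In the embedding of $G$, the gates of $O_G$ appear clockwise on the outer face, and the non-interleaving condition lets us separate the outer face into an arc carrying $I_G$ and an arc carrying $O_G$. Symmetrically, in the embedding of $D$, we can separate the outer face into an arc carrying $I_D$ (counter-clockwise) and an arc carrying $O_D$ (clockwise). Place the two embeddings in the plane so that the $O_G$-arc of $G$ faces the $I_D$-arc of $D$, mirroring one embedding if necessary; the orderings then match, and the $\gaid$-edges $\langle I_D(i), O_G(i)\rangle$ introduced by the composition can be routed without crossings between the two arcs. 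The new outer face carries $I_G$ and $O_D$ in the required orientations and without interleaving, so $G \ccomp D$ is a one-input-face planar transducer circuit with input $I_G$ and output $O_D$.

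Next, the logspace bound for constructing $G \ccomp D$ has already been noted in the text, so the syntactic composition is harmless. To \emph{evaluate} $G \ccomp D$, I would simply invoke \prettyref{cor:oifpeceval}: since $G \ccomp D$ is one-input-face planar, its evaluation can be computed in $\logDCFL$. Finally, I would check that the evaluated circuit is still a one-input-face planar transducer. By inspection of the evaluation rules, relabeling never introduces a new dependency: an $\gand$- or $\gor$-gate is either turned into a constant (losing all of its outgoing edges in $\ddep$) or into an $\gaid$-gate pointing to one of its former children (losing exactly one edge). Hence $\gr$ of the evaluated circuit is a subgraph of $\gr(G \ccomp D)$, and the restriction of the original embedding is still planar with all inputs on the outer face; the input and output orderings are untouched.

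The main obstacle is the geometric gluing in the first step; everything else is bookkeeping. What makes the gluing work is precisely the non-interleaving condition baked into the definition of a planar transducer, which guarantees that each embedding can be flipped or rotated so that the gates to be identified sit on facing arcs of the outer face in compatible orders. Once this is established, the $\logDCFL$ bound follows immediately from \prettyref{cor:oifpeceval}, and the preservation of the transducer interface under evaluation is a routine check of the evaluation rules.
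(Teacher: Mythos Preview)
Your planarity argument for $G\ccomp D$ is fine, but the claim that the composition is \emph{one-input-face} does not go through, and this is precisely the difficulty the lemma is meant to handle. Recall that $\inputs(G\ccomp D)$ consists of the variable gates $I_G$ together with all constant gates that are \emph{not sinks} in the composition. Now consider a constant gate $O_G(i)$ in the output of $G$: in $G$ alone it may well be a sink (indeed it always is when $G$ is evaluated), so it is not required to sit on the outer face of $G$ by the one-input-face hypothesis on $G$; and even if it does sit on the $O_G$-arc of the outer face of $G$, after your gluing it lies \emph{between} the new $\gaid$-edges $I_D(i{-}1)\to O_G(i{-}1)$ and $I_D(i{+}1)\to O_G(i{+}1)$ and hence in the interior of the embedding of $G\ccomp D$. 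But in the composition this gate is no longer a sink, since $I_D(i)$ depends on it. So $G\ccomp D$ can have constant non-sink gates off the outer face, and \prettyref{cor:oifpeceval} does not apply directly. The same problem arises for any constant non-sink gate of $D$ lying on the $I_D$-arc of $D$'s outer face.

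The paper's proof is built exactly around this obstruction. It first pushes every constant output gate of $G$ across the interface into $D$, relabeling the corresponding input $I_D(i)$ as that constant; call the results $G'$ and $D'$. This removes from $G'$ the only gates whose sink status changes under composition. The circuit $D'$ is still one-input-face planar on its own (all its new constants sit where the old variable inputs sat, on the outer face), so \prettyref{cor:oifpeceval} evaluates $D'$ to some $D''$ in $\logDCFL$; in $D''$ every constant is a sink. Only now does one form $G'\ccomp D''$, which \emph{is} one-input-face because every remaining constant is either a sink or lies on the $I_G$-arc, and \prettyref{cor:oifpeceval} finishes the job. In short, the missing idea in your proposal is this two-stage evaluation that first absorbs the interface constants inside $D$ before composing.
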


\begin{proof}
  Let $G=\langle \Gamma, \gamma, I_G, O_G\rangle$ and $D=\langle \Delta, \delta,
  I_D, O_D \rangle$. 
  The transducer circuits $G'$ and $D'$ are obtained from $G$ and $D$,
  respectively, as follows.  For each $i$, $0 \leq i < \abs{O_G}$, with
  $O_G(i)$ being a constant gate $b$ in $G$
  \begin{iteMize}{$\bullet$}
    \item $b$ is removed from $G'$, including $O'_G(i)$,
    \item $b$ is added to $D'$,
    \item $\delta'(I_D(i)) = b$, and
    \item $I_D(i)$ is removed from $I_D'$.
  \end{iteMize}
  In other words: all constant outputs of $G$ are moved out of $G'$ into $D'$.
  Clearly, $G'$ is composable with $D'$ and the evaluation of $G' \ccomp D'$
  equals the evaluation of $G \ccomp D$.
  Further, $G'$ and $D'$ are both one-input-face and planar.
  Because $G'$ is one-input-face, all constants are either on the outer face or are sinks.
  Since all constant gates that are sinks in $G$ but not in $G \ccomp D$ have
  been moved out of $G'$ into $D'$ it holds that all constants in $G'$ are on
  the outer face or are sinks also in $G' \ccomp D'$.
  $G' \ccomp D'$ generally is not one-input-face, because $\inputs(D')$ can contain
  constant gates that are neither sinks nor on the outer face of $G' \ccomp D'$,
  thus preventing the application of \prettyref{thm:ChakrabortyDatta2006}. 
  However, $D'$ is one-input-face and planar and can thus be evaluated in \logDCFL\ using
  \prettyref{thm:ChakrabortyDatta2006}, resulting in a circuit $D''$ where all
  constants are sinks.
  Now, in the composition of $G'$ with $D''$ all constants are either on the
  outer face of $G'$ or are sinks.
  Thus, $G' \ccomp D''$ is an one-input-face planar transducer circuit that can be
  evaluated in \logDCFL.
  The circuits $G'$, and $D'$, as well as the circuit compositions are
  computable in logarithmic space.
\end{proof}

%
%


\section{Constructing Circuits On-The-Fly}\label{sec:ltlcircuit}

\noindent We now describe the translation of \BLTL-formulas in positive normal
form to planar circuits.
As discussed in the introduction, the translation is not done as a
preprocessing step, but rather delayed until \emph{one} of the direct
subformulas has been evaluated.  We guarantee that the resulting
circuit is planar, one-input-face, and evaluated.
The path checking algorithm, which will be presented in the next
section, composes the evaluated one-input-face planar circuits in order to
represent larger partially evaluated subformulas.

Given a path $\rho$ and a \BLTL\ formula $\phi$ in positive normal form with at
most one unevaluated direct subformula,
the following construction provides a function $\circuit{\rho}$ that maps the
top-level operator of $\phi$ and its evaluated subformulas to an evaluated
one-input-face planar transducer circuit that represents a partial
evaluation of $\phi$ on $\rho$.
The output arity of the circuit is $\abs{\rho}$, the input arity is $\abs{\rho}$
for all formulas except for atomic propositions, where the circuit has input
arity 0. 
The circuit can be constructed by a logspace restricted Turing machine.
The full details of the construction are provided 
in the appendix.

\subsection{Atomic propositions.}
\noindent For an atomic proposition $p$, the circuit is a set of constant gates,
one for each path position. The value of a gate is the value of $p$ at
the respective position of $\rho$:  
$\circuit{\rho}(p) = \langle \left\{ o_0,\dots,o_{n-1} \right\},l,\varepsilon,O \rangle$,
where $n= \abs{\rho}$, $O = o_0, \dots, o_{n-1}$, $l(o_i) = 1$ iff
$p \in \rho_i$, and $\varepsilon$ denotes an empty
input sequence.
Clearly, a set of constant gates is an evaluated one-input-face planar transducer circuit.

\subsection{Unary operators.}
\noindent For the unary operators $\Next, \WNext, \pNext,$ and $\pWNext$, the
circuit shifts the value of the input by one position in the
respective direction. The first (respective last) position of the
output is a constant with value 0 for strong operators and value 1 for
weak operators. Again, the circuits are obviously planar, one-input-face, and
evaluated, and of input
and output arity $\abs{\rho}$.
E.g.\ $\circuit{\rho}(\Next) = \langle G,l,I,O \rangle$, where
$n = \abs{\rho}$, $I = v_0, \dots, v_{n-1}$, $O = o_0, \dots, o_{n-1}$, $G = \{v_0, \dots, v_{n-1}\} \cup \{o_0, \dots, o_{n-1} \}$, and 
\begin{align*}
  l(o_i) &= \begin{cases}
    \langle \gaid, v_{i+1}\rangle &\text{for } 0 \leq i < n-1,\\ 
    0 &\text{for } i = n-1, \text{ and}
  \end{cases}\\
  l(v_i) &= \gvar \text{ for } 0 \leq i < n.
\end{align*}

\subsection{Binary operators.}
\noindent The binary operators require two constructions, one for the case where
the left argument has been evaluated and one for the case where the right
argument has been evaluated.  For each operator $\mathit{op}$, we define
two logspace-computable functions $\circuit{\rho}(s,\mathit{op})$ and
$\circuit{\rho}(\mathit{op},s)$, which compute the circuit given an
evaluation $s \in \Bool^{\abs{\rho}}$ of the left and right
subformula, respectively. 

For the Boolean operators, the two functions are the same, e.g., 
$\circuit{\rho}(\vee,s)= \circuit{\rho}(s,\vee) = \langle G,l,I,O \rangle$, where 
$n = \abs{\rho}$, $I = v_0, \dots, v_{n-1}$, $O = o_0, \dots, o_{n-1}$, $G = \{v_0, \dots, v_{n-1}\} \cup \{o_0, \dots, o_{n-1} \}$, and 
\begin{align*}
  l(o_i) &= \begin{cases}
    \langle \gaid, v_{i}\rangle &\text{for } s_i = 0 \text{ and } 0 \leq i < n,\\ 
    1 &\text{for } s_i = 1 \text{ and } 0 \leq i < n, \text{ and}
  \end{cases}\\
  l(v_i) &= \gvar \text{ for } 0 \leq i < n.
\end{align*}

\noindent 
For the unbounded temporal operators, the constructions are derived
from the expansion laws of the logic, such as $\phi_{l} \until
\phi_{r}\, \equiv\, \phi_r \vee (\phi_l \wedge \Next\,(\phi_{l} \until
\phi_{r}))$ for the unbounded Until operator.  The expanded formula
is transformed into a transducer circuit by
substituting constants for evaluated subformulas and variable gates for
unevaluated subformulas. E.g.\ $\circuit{\rho}(\until,s) = \langle G,l,I,O \rangle$, where 
$n = \abs{\rho}$, $I = v_0, \dots, v_{n-1}$, $O = o_0, \dots, o_{n-1}$, $G = \{v_0, \dots, v_{n-1}\} \cup \{o_0, \dots, o_{n-1} \}$, and 
\begin{align*}
  l(o_{i}) &= \begin{cases}
    \langle \gand, v_i, o_{i+1}\rangle &\text{for } 0 \leq i < n-1 \text{ and } s_i = 0,\\ 
    1 &\text{for } 0 \leq i < n-1 \text{ and } s_i = 1,\\
    s_{n-1} &\text{for } i = n-1, \text{ and}
  \end{cases}\\
  l(v_i) &= \gvar \text{ for } 0 \leq i < n,
\end{align*}
\noindent and $\circuit{\rho}(s,\until) = \langle G,l,I,O \rangle$, where 
\begin{align*}
  l(o_i) &= \begin{cases}
    \langle \gor, v_i, o_{i+1}\rangle &\text{for } 0 \leq i < n-1 \text{ and } s_i = 1,\\
    \langle \gaid, v_i \rangle &\text{for } 0 \leq i < n-1 \text{ and } s_i = 0,\\
    \langle \gaid, v_{n-1} \rangle &\text{for } i = n-1, \text{ and}
  \end{cases}\\
  l(v_i) &= \gvar \text{ for } 0 \leq i < n.
\end{align*}

%

\begin{figure}[t]
  \begin{center}

    \newcommand{\mi}{7}
    \newcommand{\bound}{3}
    \newcommand{\gate}[2]{i{#1}j{#2}}
    \newcommand{\orgate}[2]{$\vee_{#1,#2}$}
    \newcommand{\andgate}[2]{$\wedge_{#1,#2}$}
    \newcommand{\idgate}[2]{$id_{#1,#2}$}
    \newcommand{\vargate}[2]{$?_{#1,#2}$}
    \def\param{0/0,1/1,2/0,3/1,4/1,5/1,6/0,7/1}
    \tikzstyle{sig}=[->]

    \scalebox{0.85}{%
    \begin{tikzpicture}[scale=1.8]
    
    \foreach \i/\s in \param {
    \foreach \j in {0,...,\bound} {
    \ifthenelse { \j<\bound \AND \( \s=0 \OR \i=\mi \) }
    { \node (\gate{\i}{\j}) at (\i,-\j) {\idgate{\i}{\j}}; }
    { \ifthenelse { \j<\bound }
    { \node (\gate{\i}{\j}) at (\i,-\j) {\orgate{\i}{\j}}; }
    { 
      \node (\gate{\i}{\j}) at (\i,-\j) {\vargate{\i}{\j}};
      \node[gray] at ( $ (\i, -\j)-(0,.5) $ ) {\Large\bf \s};
    }
    }
    }}

    \foreach \i/\s in \param {
    \foreach \j in {0,...,\bound} {
    \pgfmathtruncatemacro{\ni}{(\i+1)}
    \pgfmathtruncatemacro{\nj}{(\j+1)}
    \ifthenelse { \j<\bound \AND \( \s=0 \OR \i=\mi \) }
    { \draw [sig] (\gate{\i}{\j}) -- (\gate{\i}{\nj}); }
    {\ifthenelse { \j<\bound }
    {\draw [sig] (\gate{\i}{\j}) -- (\gate{\i}{\nj});
     \draw [sig] (\gate{\i}{\j}) -- (\gate{\ni}{\nj}); }
    {}
    }
    }}

  \end{tikzpicture}}
  \caption{%
  The circuit $\circuit{\rho}(s,\buntil{3})$ for $\abs{\rho} = 8$. The bottom
  line shows an example evaluation $s=0,1,0,1,1,1,0,1$ of the left subformula.
  }
  \label{fig:buntil2}
\end{center}
\end{figure}

\noindent 
The most difficult part of the construction is the translation for the bounded
operators, which we now present in detail for the bounded Until
operator~$\buntil{b}$. 
%
\prettyref{fig:buntil2} illustrates the construction of
$\circuit{\rho}(s,\buntil{b})$ for a valuation $s = 0,1,0,1,1,1,0,1$ of the left
subformula. The gates indexed by ${i,j}$ compute the value of the formula at
position $i$ and ``remaining'' bound $b-j$.  If, at some position, the left
subformula evaluates to 0, then the formula simplifies to the right subformula,
independently of the remaining bound. This results in vertical edges in the
circuit. If the left subformula evaluates to 1, then the formula is true if it
is either true for bound $j-1$ in position $i+1$ or for bound $j-1$ in position
$i$.  In the circuit, this is computed as a disjunction of the vertical and the
diagonal neighbor.

We define $\circuit{\rho}(s,\buntil{b}) = \langle G,l,I,O \rangle$, where $G=\{v_{i,j} \mid 0 \leq i < n, 0 \leq j \leq b\}$, $I = v_{0,b}, \dots, v_{n-1,b}$, $O = v_{0,0}, \dots, v_{n-1,0}$, and 
\begin{equation*}
  l(v_{i,j}) = \begin{cases}
    \left\langle \gaid,v_{i,j+1} \right\rangle &\text{for } 0 \leq i < n-1 \text{ and } j < b \text{ and } s_i = 0,\\
    \left\langle \gor,v_{i,j+1},v_{i+1,j+1} \right\rangle &\text{for } 0 \leq i < n-1 \text{ and } j < b \text{ and } s_i = 1,\\
    \left\langle \gaid,v_{i,j+1} \right\rangle &\text{for } i = n-1 \text{ and } j < b, \text{ and}\\
    \gvar &\text{for } j = b.
  \end{cases}
\end{equation*} 
\noindent The construction of $\circuit{\rho}(\buntil{3},s)$ for the valuation
$s = 0,1,0,0,0,0,0,1$ of the right subformula is illustrated in
\prettyref{fig:buntil1}. Here, the gates indexed by $i$ compute the value of the
formula at position $i$. If the right subformula evaluates to 1 in position $i$,
then the value of the formula is 1 in position $i$, and is computed by the
conjunction over the values of the left subformulas in positions $i-b$ to $i-1$.
Further to the left from $i-b$, the value is 0 until another 1 occurs in the
valuation of the right subformula.

\begin{figure}[t]
  \begin{center}

    \newcommand{\mi}{7}

    \newcommand{\gate}[1]{i{#1}}
    \newcommand{\igate}[1]{ii{#1}}

    \newcommand{\onegate}[1]{$1_{#1}$}
    \newcommand{\nullgate}[1]{$0_{#1}$}
    \newcommand{\orgate}[1]{$\vee_{#1}$}
    \newcommand{\andgate}[1]{$\wedge_{#1}$}
    \newcommand{\idgate}[1]{$id_{#1}$}
    \newcommand{\vargate}[1]{$?_{#1}$}

    \scalebox{0.85}{%
    \begin{tikzpicture}[scale=1.8]
   
      \def\param{0/0/and, 1/1/one, 2/0/null, 3/0/null, 4/0/and, 5/0/and, 6/0/and, 7/1/one}

    \foreach \i/\s/\g in \param {
      \node (\gate{\i}) at (\i,0) {\csname\g gate\endcsname{\i}};
      \node (\igate{\i}) at (\i,-1) {\vargate{\i}};
      \node[gray] at (\i,-1.5) {\Large\bf \s};
    }

    \foreach \i/\s/\g in \param { 
    \def\x{and}
    \ifnum \i<\mi \ifx\x\g
        \pgfmathtruncatemacro{\ni}{(\i+1)}
        \draw[->] (\gate{\i}) -- (\gate{\ni});
        \draw[->] (\gate{\i}) -- (\igate{\i});
      \fi \fi
    }

  \end{tikzpicture}}
  \caption{%
  The circuit $\circuit{\rho}(\buntil{3},s)$ for $\abs{\rho} = 8$. The bottom
  line shows an example evaluation $s=0,1,0,0,0,0,0,1$ of the right subformula.
  }
  \label{fig:buntil1}
\end{center}
\end{figure}

The circuit is therefore defined as $\circuit{\rho}(\buntil{b},s) = \langle G,l,I,O \rangle$, where  $I = v_0, \dots, v_{n-1}$, $O = o_0, \dots, o_{n-1}$, $G = \{v_0, \dots, v_{n-1}\} \cup \{o_0, \dots, o_{n-1} \}$, and
\begin{align*}
  l(o_{i}) &= \begin{cases}
    1 &\text{for } 0 \leq i < n \text{ and } s_i = 1,\\
    0 &\text{for } 0 \leq i < n \text{ and } \forall j, i \leq j < \min(i+b,n) \qdot s_j = 0,\\
    \langle \gand, v_i, o_{i+1}\rangle &\text{for } 0 \leq i < n-1 \text{, } s_i = 0 \text{, } \exists j,i < j < \min(i+b,n) \qdot s_j = 1, \text{ and}
  \end{cases}\\
  l(v_i) &= \gvar \text{ for } 0 \leq i < n.
\end{align*}

\noindent 
We conclude the section with a lemma that formally states the existence of the
logspace-computable function $\circuit{\rho}$ with the required properties. 
The complete construction of $\circuit{\rho}$ is provided in
the \hyperref[sec:appendix]{appendix}.

\begin{lem}
  Let $\phi$ and $\psi$ formulas and $p$ an atomic proposition.
  Let $\rho$ a path and $s,t \in \Bool^{\abs{\rho}}$ with $s=\phi(\rho)$ and $t
  = \psi(\rho)$.
  There is an logspace-computable function $\circuit{\rho}$ mapping its arguments to
  evaluated one-input-face planar transducer circuits such that
  $p(\rho) = \circuit{\rho}(p)(),$
  $(op\, \phi)(\rho) = \cirf{\circuit{\rho}(op)}(s)$ for $op \in \{\Next,\WNext,\pNext,\pWNext\}$, and
  $(\phi \,op\, \psi)(\rho) = \cirf{\circuit{\rho}(s,op)}(t)$ and
  $(\phi \,op\, \psi)(\rho) = \cirf{\circuit{\rho}(op,t)}(s)$
  for $op \in \{\vee,\wedge,\until,\buntil{b},\release,\brelease{b},\puntil,\pbuntil{b},\prelease,\pbrelease{b}\}$.
  \qed
\label{lem:circuitsound}
\end{lem}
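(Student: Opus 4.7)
The plan is to prove the three obligations of the statement case by case, relying on the constructions given in the body of Section~\ref{sec:ltlcircuit} together with their duals in the appendix. The obligations are: semantic correctness, that is, that the $i$\textsuperscript{th} output bit of $\circuit{\rho}(\cdot)$ applied to the given subformula valuation equals the value of the whole formula at position $i$; structural correctness, that is, that each constructed transducer circuit is evaluated, planar, and one-input-face; and logspace computability of the mapping.

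For semantic correctness, I would argue uniformly from the expansion laws of Section~\ref{sec:preliminaries}. Each circuit is the Boolean expression obtained by unrolling the top-level expansion law over the $\abs{\rho}$ path positions (and, in the bounded case, over the $b+1$ remaining-bound levels) and then substituting the precomputed valuation $s$ of one of the two subformulas. A short induction on the number of unrollings, using only the definition of the evaluation of a circuit from Section~\ref{sec:monotonecircuits}, shows that the value carried by the $i$\textsuperscript{th} output gate equals $(\phi\,op\,\psi)(\rho)_i$. For $\circuit{\rho}(s,\buntil{b})$, the two gate labels for $v_{i,j}$ mirror the two branches of $\phi_l \buntil{b} \phi_r \equiv \phi_r \vee (\phi_l \wedge \Next(\phi_l \buntil{b-1}\phi_r))$ under $s_i \in \{0,1\}$; for $\circuit{\rho}(\buntil{b},s)$ the three label cases correspond to precomputing the first position inside the window at which $\phi_r$ already holds. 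The past operators and the release operators are handled identically via their own expansion laws together with the dualities listed in the preliminaries.

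For the structural properties, I would check that every constant gate produced by the construction is a sink (so the circuit is evaluated) and would exhibit explicit planar embeddings. The atomic, unary, and Boolean cases consist of a single row of outputs over a single row of inputs; the unbounded binary temporal cases and $\circuit{\rho}(\buntil{b},s)$ are horizontal chains sitting above a row of variable gates (\prettyref{fig:buntil1}); and $\circuit{\rho}(s,\buntil{b})$ is the triangular grid of \prettyref{fig:buntil2}. In each of these embeddings the inputs form one contiguous arc and the outputs a disjoint contiguous arc of the outer face, in counter-clockwise and clockwise order respectively, which is exactly the one-input-face condition. The past-time cases are obtained by reflecting the corresponding future-time embeddings along the time axis.

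The main technical obstacle is the triangular-grid case: one must verify that, after the $v_{i,j}$ gates have been relabeled to $\gaid$- or $\gor$-gates according to $s_i$, the diagonal edges $v_{i,j+1}\to v_{i+1,j}$ together with the vertical edges still fit inside a triangle whose top edge carries $v_{0,b},\dots,v_{n-1,b}$ and whose bottom edge carries $v_{0,0},\dots,v_{n-1,0}$, without the two rows becoming interleaved on the outer face. This follows from the observation that no edge leaves the bottom row and only vertical edges enter the rightmost column, so the bottom, left, and top sides of the triangle are free to host the outer face in the required order. Logspace computability is then immediate: each gate is indexed by a position $i < \abs{\rho}$ and, in the bounded case, a level $j \le b$, which by Lemma~\ref{lem:bltlprune} may be taken at most $\abs{\rho}$; each gate label depends either on a single bit of $s$ or, for $\circuit{\rho}(\buntil{b},s)$, on the existence of a $1$ in a window of $s$ of length $b$, both decidable in $O(\log(\abs{\phi}+\abs{\rho}))$ space.
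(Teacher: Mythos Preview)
The paper does not actually prove this lemma; the $\qed$ signals that verification is left to the reader from the explicit constructions in Section~\ref{sec:ltlcircuit} and the appendix. Your plan covers exactly those verifications and is already more detailed than anything the paper offers, so there is no meaningful difference in approach.

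One point does need correcting. Your structural check that ``every constant gate produced by the construction is a sink'' will fail for several of the binary temporal cases. In $\circuit{\rho}(\until,s)$, for instance, whenever $s_i=1$ the output gate $o_i$ is the constant~$1$, but if $s_{i-1}=0$ then $o_{i-1}=\langle\gand,v_{i-1},o_i\rangle$ depends on it, so $o_i$ is not a sink and the circuit as literally written in the appendix is not evaluated in the sense of Section~\ref{sec:monotonecircuits}. The same happens in $\circuit{\rho}(\release,s)$, $\circuit{\rho}(\buntil{b},s)$, $\circuit{\rho}(\brelease{b},s)$, and their past-time analogues; note in particular the cascading case where a run of $0$'s at the right end of $s$ propagates the constant $0$ leftwards through a chain of $\gand$-gates. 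The remedy is a constant-propagation pass: the final label of each $o_i$ depends only on $s_i$ and on the nearest index $j>i$ at which $s$ takes a particular value (or on whether no such $j$ exists), which is computable with a single scan using $O(\log n)$ workspace. After that pass the circuit is genuinely evaluated, and one-input-face then comes for free, since in an evaluated circuit $\inputs(G)=\vars(G)=I$, which the planar-transducer embedding already places on the outer face. Make this pass explicit rather than asserting the evaluated property directly.
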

%


\section{Parallel Tree Contraction for Path Checking}
\label{sec:contraction}

\noindent The parallel path checking algorithm for \BLTL\ formulas is based on a
bottom-up evaluation of the formula converted to positive normal form starting with the atomic
propositions. The central data structure is a binary tree, called the
\emph{contraction tree}, that keeps track of the dependencies between
the different evaluation steps.  Initially, the contraction tree
corresponds to the formula tree where each unary node (due to $\Next$,
$\WNext$, $\pNext$, and $\pWNext$ operators) has been merged into a single
node with its unique child node.
The evaluation of the formula is performed by \emph{contraction steps,} which
contract a node that has already been evaluated with its parent into a new edge
from its sibling to its parent. 
The resulting edge is labeled by a planar circuit that represents the partially
evaluated subformula.

Since no child needs to wait for the evaluation of its sibling before
it can be contracted with its parent, a constant portion of the nodes
can be contracted in parallel, and, within logarithmic time, the tree
is evaluated to a single constant circuit.
We now describe and analyze this process in more detail.

\subsection{Contraction tree}

Given a formula $\phi$ in positive normal form and a path $\rho$, let $\phi_0,\dots,\phi_{m-1}$
be the subformulas of $\phi$ with $\phi_0 = \phi$. A contraction tree is an
edge labeled tree $\eT = \langle T,t,l \rangle $ where $T \subseteq
\{\phi_0,\dots,\phi_{m-1}\} \cup \{root\}$, $t \subseteq \left\{ \langle \phi_i,\phi_j\rangle \mid \phi_j
\text{ is a subformula of } \phi_i \right\} \cup \{\langle root,\phi\rangle\}$, and $l$ is a mapping that labels
each edge of $\eT$ with an evaluated one-input-face planar transducer circuit, such that the
following conditions hold:
\begin{enumerate}[(1)]
  \item $\eT \setminus \{root\}$ is an ordered, rooted, regular, binary tree,
  \item all edge labels of $\eT$ are evaluated one-input-face planar transducer circuits of
    arity $\abs{\rho}$, all leaves are atomic propositions, and
  \item for $\tau=\langle \phi_i,\phi_j\rangle \in t$ it holds that
    $\cirf{l(\tau)}(\phi_j(\rho)) = \psi(\rho)$, where $\psi$ is the direct
    subformula of $\phi_i$ that has $\phi_j$ as a subformula.
    Further, for the unique edge $\tau=\langle root,\phi_j\rangle \in t$ it holds that
    $\cirf{l(\tau)}(\phi_j(\rho)) = \phi(\rho)$.
\end{enumerate}
The special node $root$ and the corresponding edge $\langle root,\phi\rangle$
were added solely for technical reasons. 

The first condition ensures that the overall contraction process performs in a
logarithmic number of parallel steps. The second condition provides the
preconditions for a single contraction step. Namely, the compositionality of the
constructed circuits and the complexity of \logDCFL. The third condition states
the induction hypothesis for the soundness of the whole algorithm: When a
transducer circuit is attached to an edge of the contraction tree, it encodes the
semantics of all partially evaluated subformulas contracted into that edge.

\subsection{Initialization step} 

The initial contraction tree $\eT\setminus \{root\}$ is the formula tree $\phi$
where each unary node (due to $\Next$, $\WNext$, $\pNext$, and $\pWNext$
operators) has been merged with its unique child node.
The corresponding new parent edge is labeled by the transducer circuit that results
from composing the circuits produced by applying $\circuit{\rho}$ to the
corresponding \BLTL\ operators of the eliminated nodes. 

\begin{lem}\label{lem:init}
  Given a formula $\phi$ in positive normal form and a path $\rho$, a contraction tree $\eT$ can be
  constructed from $\phi$ and $\rho$ by an logspace restricted Turing machine.
\end{lem}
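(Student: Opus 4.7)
The plan is to follow the construction sketched just before the lemma statement and verify each of the three defining conditions of a contraction tree in turn. First I would build the tree itself by traversing the syntax tree of $\phi$ (a standard logspace task): each binary operator becomes an internal node with its two argument subformulas as children, each atomic proposition becomes a leaf, and every maximal chain of unary operators $\Next$, $\WNext$, $\pNext$, $\pWNext$ is collapsed onto the node immediately below it. Prepending the artificial $root$ above the representative of $\phi$ yields an ordered, rooted, regular binary tree whose leaves are atomic propositions, establishing condition~(1) and the leaf part of condition~(2).

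For the edge labels, fix an edge $\tau = \langle \phi_i, \phi_j \rangle$ and let $O_1 \cdots O_k$ be the unary chain for which $\psi = O_1 \cdots O_k\, \phi_j$ is the direct subformula of $\phi_i$ containing $\phi_j$ (taking $k = 0$ when there is no chain, so that $\psi = \phi_j$; for the root edge take $\psi = \phi$). I would set $l(\tau) = \circuit{\rho}(O_k) \ccomp \cdots \ccomp \circuit{\rho}(O_1)$, or the identity transducer of arity $\abs{\rho}$ when $k = 0$. Each factor is an evaluated one-input-face planar transducer of arity $\abs{\rho}$ by Lemma~\ref{lem:circuitsound}, composition preserves planarity and the one-input-face property, and both the individual $\circuit{\rho}(O_i)$ and the composition operator $\ccomp$ are logspace-computable by earlier statements in the paper. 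Condition~(3) then reduces to Lemma~\ref{lem:circuitsound} together with the fact that composition of transducer circuits realizes composition of their functions, giving $\cirf{l(\tau)}(\phi_j(\rho)) = \psi(\rho)$ by an easy induction on $k$.

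The main obstacle is the \emph{evaluated} clause of condition~(2). Composing two evaluated one-input-face planar transducers can in general destroy the evaluated property: each $\circuit{\rho}(O_i)$ contains a boundary constant that is a sink in isolation, but after composition the $\gaid$ input gate of the next factor refers to that constant, making it non-sink; Lemma~\ref{lem:oifccomp} only places the evaluation of an arbitrary such composition in $\logDCFL$. However, the unary circuits contain only variable gates, $\gaid$ gates, and a single boundary constant, so the composition of a chain of them consists solely of parallel $\gaid$-chains, each terminating either in an input variable or in exactly one boundary constant. I would evaluate such a circuit in logspace by walking, for each output gate, its unique $\gaid$-chain to the source and relabeling the output as $\gvar$ or as the corresponding constant. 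Equivalently, one could bypass $\ccomp$ entirely and construct the evaluated label directly by computing, for each position $i$, the net shift $d$ induced by $O_1 \cdots O_k$ and the first boundary effect along the chain, producing either $\langle \gaid, v_{i+d} \rangle$ or a constant by inspection. Either option yields an evaluated one-input-face planar transducer circuit of arity $\abs{\rho}$ in logspace, completing the verification of all three conditions.
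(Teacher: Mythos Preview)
Your proposal is correct and follows essentially the same construction as the paper: collapse the unary chains, label each resulting edge by the composition of the corresponding $\circuit{\rho}$ circuits (using the identity transducer for the empty chain and for $\langle root,\phi\rangle$), and verify the three conditions via Lemma~\ref{lem:circuitsound}. You are in fact more careful than the paper on one point: the paper's proof asserts that the unary circuits ``do not contain any constants'' and therefore compose directly to an evaluated circuit, whereas each $\circuit{\rho}(O_i)$ does contain a boundary constant output gate; your logspace $\gaid$-chain walk (or the equivalent net-shift computation) is exactly the right way to close this small gap.
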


\newcommand{\idcir}{\ensuremath id}

\begin{proof}
  Define $\parent(\chi)$ to be the subformula $\psi$ of $\phi$ such that
  $\chi$ is the maximal subformula of $\psi$ in $\phi$.
  Let $\eT = \langle T,t,l \rangle$ with 
  $T = \{\phi_i \mid \phi_i \text{ is not of the form } \Next \psi, \WNext \psi,
  \pNext \psi, \text{ or }
  \pWNext \psi, 0 \leq i < m\} \cup \{root\}$, $t = \{\langle \phi_i,\phi_j\rangle \in T \times T \mid \phi_j \text{ is
  a maximal subformula in $T$ of } \phi_i \} \cup \{\langle root,\phi\rangle\}$, and for
  $\tau \in t$,
\begin{equation*}
    l(\tau) =
    \begin{cases}
      \idcir &\text{for $\tau = \langle root,\phi\rangle$},\\
      c(\tau) &\text{otherwise,}
    \end{cases}
\end{equation*}
where for $0 \leq i < m$,
\begin{equation*}
     c(\langle \phi_i,x\rangle) = 
  \begin{cases}
    \circuit{\rho}(\Next) \ccomp c(\parent(\phi_i)) &\text{for } \parent(\phi_i) = \Next \phi_i,\\
    \circuit{\rho}(\WNext) \ccomp c(\parent(\phi_i)) &\text{for } \parent(\phi_i) = \WNext \phi_i,\\
    \circuit{\rho}(\pNext) \ccomp c(\parent(\phi_i)) &\text{for } \parent(\phi_i) = \pNext \phi_i,\\
    \circuit{\rho}(\pWNext) \ccomp c(\parent(\phi_i)) &\text{for } \parent(\phi_i) = \pWNext \phi_i,\\
    \idcir &\text{otherwise,}
  \end{cases}
\end{equation*}
where $\idcir$ is the identity transducer circuit of arity $\abs{\rho}$.

In $\eT$, all simple paths (due to $\Next$, $\WNext$, $\pNext$, and $\pWNext$ operators)
have been collapsed into single edges.  This ensures that $\eT \setminus
\{root\}$ is an ordered, rooted, regular, binary tree.
The circuits $\circuit{\rho}(\Next)$, $\circuit{\rho}(\WNext)$,
$\circuit{\rho}(\pNext)$, and $\circuit{\rho}(\pWNext)$ are
evaluated one-input-face planar transducer circuits that do not contain any
constants. Hence, any number of these
circuits can be composed resulting in an evaluated one-input-face planar transducer
circuit. 
The composition of planar transducer circuits can be performed in logarithmic
space. The mapping $c$ is defined recursively above. However, it is easy to see
that the whole procedure can be performed iteratively in logarithmic space in
the size of $\rho$ plus the size of $\phi$.
From the above, the first and the second condition for a contraction tree are
clear. The third condition is obtained by applying \prettyref{lem:circuitsound}
to the construction.
\end{proof}

\subsection{Contraction step}\label{sec:contraction-step}

In the following, we describe the contraction of the tree $\eT$. During a
contraction step, a node that is labeled by a constant circuit is merged with
its parent node. The resulting node is contracted into the edge from its sibling
to its grandparent.

\begin{lem}
  Let $\phi_i$ a node of a contraction tree $\eT$ with child nodes $\phi_j$ and
  $\phi_k$ and parent node $p$. 
  Assume $\void{\langle
  \Gamma_{l(\phi_j)},\gamma_{l(\phi_j)},I_{l(\phi_j)},O_{l(\phi_j)}\rangle =}
  \phi_j$ to be a leaf. Let $s$ be the evaluation of $\cir_{\rho}(\phi_j)
  \ccomp l(\langle\phi_i,\phi_j\rangle)$. Let $\eT'=\langle T',t',l' \rangle$, where 
{\small
  \begin{align*}
    &T' =T \setminus \{\phi_j,\phi_i\},\\
    &t' = t \cup \{\langle\phi_p,\phi_k\rangle\} \setminus \{\langle\phi_i,\phi_j\rangle,\langle\phi_i,\phi_k\rangle,\langle\phi_p,\phi_i\rangle\},\\
    &l'(\phi_k) = 
    \begin{cases}
      \text{evaluation of } l(\langle\phi_i,\phi_k\rangle) \ccomp \circuit{\rho}(\cirf{s}(),\phi_i) \ccomp l((\phi_i,p) &\text{if $\phi_j$ is the left child of $\phi_i$},\\
      \text{evaluation of } l(\langle\phi_i,\phi_k\rangle) \ccomp \circuit{\rho}(\phi_i,\cirf{s}()) \ccomp l((\phi_i,p) &\text{if $\phi_j$ is the right child of $\phi_i$},
    \end{cases}\\
    &l'(x) = l(x) \text{ for } x \neq \phi_k.
  \end{align*}
}
  $\eT'$ is a contraction tree and can be computed in $\logDCFL$.
  \label{lem:contraction}
\end{lem}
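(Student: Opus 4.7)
The plan is to verify the three defining conditions of a contraction tree for $\eT'$ in turn and then bound the cost of producing it. The main obstacle will be maintaining the one-input-face planar invariant throughout the three-way composition $l(\langle\phi_i,\phi_k\rangle) \ccomp \circuit{\rho}(\cirf{s}(),\phi_i) \ccomp l(\langle\phi_i,p\rangle)$ that labels the new edge.

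First I would handle the tree structure, which is immediate: removing the leaf $\phi_j$ together with its parent $\phi_i$ and reattaching the sibling $\phi_k$ to the grandparent $p$ is exactly the contraction step from \prettyref{alg:contraction}, and its two cases (root parent vs.\ internal parent) match the two cases in the definition of $l'$. Hence $\eT'\setminus\{root\}$ remains an ordered, rooted, regular binary tree whose leaves are still atomic propositions, and every inherited edge label already satisfies the contraction-tree conditions.

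For the new edge label I would argue that each circuit entering the three-way composition is an evaluated one-input-face planar transducer circuit of the expected arity. Since $\phi_j$ is atomic, $\circuit{\rho}(\phi_j)$ has input arity $0$ and output arity $\abs{\rho}$, so by \prettyref{lem:oifccomp} the evaluation $s$ of $\circuit{\rho}(\phi_j) \ccomp l(\langle\phi_i,\phi_j\rangle)$ is an evaluated one-input-face planar transducer circuit, and $\cirf{s}()$ is a bit vector of length $\abs{\rho}$. By \prettyref{lem:circuitsound}, $\circuit{\rho}(\cirf{s}(),\phi_i)$ (or its right-argument variant) is again an evaluated one-input-face planar transducer circuit of arity $\abs{\rho}$. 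Two further applications of \prettyref{lem:oifccomp}, whose arity preconditions are satisfied by construction, then yield the evaluation of the three-way composition as the desired label. For the semantic condition~(3), I would use the corresponding condition on $\eT$ as induction hypothesis: applied to the edge $\langle\phi_i,\phi_j\rangle$ it gives $\cirf{s}() = \psi_j(\rho)$ for the direct subformula $\psi_j$ of $\phi_i$ containing $\phi_j$; the binary/unary case of \prettyref{lem:circuitsound} then converts the evaluation of the sibling-side direct subformula of $\phi_i$, which is produced by $l(\langle\phi_i,\phi_k\rangle)$ from $\phi_k(\rho)$, into $\phi_i(\rho)$; and a final composition with $l(\langle\phi_i,p\rangle)$ yields the direct subformula of $p$ containing $\phi_i$ evaluated on $\rho$ (or $\phi(\rho)$ when $p = root$), as required.

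Finally, for the complexity bound, all circuit compositions and applications of $\circuit{\rho}$ are logspace-computable by \prettyref{lem:oifccomp} and \prettyref{lem:circuitsound}, while the two circuit evaluations performed above are each in $\logDCFL$ by \prettyref{lem:oifccomp}. A constant number of $\logDCFL$ evaluations interleaved with logspace reductions still lies in $\logDCFL$, so the entire contraction step can be carried out in $\logDCFL$.
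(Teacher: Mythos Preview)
Your proposal is correct and follows essentially the same approach as the paper's own proof: both verify the three contraction-tree conditions using \prettyref{lem:oifccomp} for the planarity/one-input-face invariant of the composed label and \prettyref{lem:circuitsound} for well-definedness and the semantic condition, and both bound the cost by observing that a constant number of $\logDCFL$ evaluations linked by logspace constructions remains in $\logDCFL$. Your write-up is in fact slightly more explicit than the paper's in spelling out the chain for condition~(3).
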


\begin{proof}
  First, note that by construction of $\eT$ it holds that $\phi_i,\phi_j,\phi_k \neq root$. 
  Clearly, if $\eT\setminus\{root\}$ is an ordered, rooted, regular, binary tree then
  $\eT'\setminus\{root\}$ is an ordered, rooteted, regular, binary tree, as well.
By construction of $\eT'$ a leaf in $\eT'$ is a leaf in $\eT$ as well. Thus,
because $\eT$ is a contraction tree, each leaf in $\eT'$ is an atomic proposition.
Since $\phi_j$ is a leaf $\phi_j$ is an atomic proposition. Due to
\prettyref{lem:circuitsound} $\cir_{\rho}(\phi_j)$ can be composed with
$l(\langle\phi_i,\phi_j\rangle)$ resulting in an one-input-face planar circuit that can be evaluated in
\logDCFL. 
Thus $s$ is a constant circuit of arity $\abs{\rho}$ and
$\circuit{\rho}(\cirf{s}(),\phi_i)$ (respectively $\circuit{\rho}(\phi_i,\cirf{s}())$) is
well defined and of arity $\abs{\rho}$.
Because $\eT$ is a contraction tree and by \prettyref{lem:oifccomp},
$l'(\langle p,\phi_k\rangle)$ is an evaluated one-input-face planar transducer circuit.  
By the definition of $\ccomp$ the input arity of $l'(\langle p,\phi_k\rangle)$
is the input arity of $l(\langle \phi_i,\phi_k \rangle)$ and the output arity of
$l'(\langle p,\phi_k\rangle)$ is the output arity of $l(\langle
p,\phi_i\rangle)$. Because $\eT$ is a contraction tree this arity is
$\abs{\rho}$ in both cases.  All remaining edge labels of $\eT'$ inherit the
arities from $\eT$.
Considering the edge $\langle p,\phi_k\rangle \in t'$, the third condition for a contraction
tree holds, since $\eT$ is a contraction tree, and due to the definition of $\ccomp$, and
because of \prettyref{lem:circuitsound}.  For all other edges, the property is
directly inherited from $\eT$.
The computation of $\eT'$ is in $\logDCFL$ because of \prettyref{lem:oifccomp} and
\prettyref{lem:circuitsound}.
\end{proof}

\subsection{The path checking algorithm.}

Applying \prettyref{lem:init} and \prettyref{lem:contraction} to $\phi$ and $\rho$, we
can use \prettyref{alg:contraction} to obtain an $\ACO(\logDCFL)$ solution 
to the path checking problem. This proves our main theorem:


\begin{proof}[\prettyref{thm:main}] 
Given a \BLTL\ formula $\phi$ and a path $\rho$, convert $\phi$
into positive normal form using only logarithmic space.
A contraction tree $\eT$ is initialized from $\phi$ in logarithmic space by use
of \prettyref{lem:init} and then \prettyref{alg:contraction} is applied to $\eT$
with the contraction step defined in \prettyref{sec:contraction-step}. Note that
the extra $root$ node and the edge $\langle root,\phi\rangle$ in $\eT$ do not
influence the performance of \prettyref{alg:contraction}. The algorithm
terminates when there is only a single leaf node $n$ and a single edge $\langle
root,n\rangle$ left in the contraction tree.  By \prettyref{lem:contraction}, the
contraction algorithm performes in $\ACO(\logDCFL)$.
The value of the first output gate of the evaluation of the circuit $c =
\circuit{\rho}(n) \ccomp l(\langle root,n\rangle)$ is the result. 
By \prettyref{lem:contraction}, $c$ is evaluated and one-input-face and can hence be evaluated in
\logDCFL. 
The whole construction can be executed within $\ACO(\logDCFL)$ in $\abs{\phi} + \abs{\rho}$.
Using \prettyref{lem:bltlprune}, we
can assume that any bound occurring in $\phi$ has at most size $\abs{\rho}$. 
The sum of the bounds is thus polynomial in the size of $\phi$ (without bounds)
and the length of $\rho$. Thus, the overall complexity of $\ACO(\logDCFL)$ is
independent of the encoding of the bounds in $\phi$.
\end{proof}

Since $\LTL$ and $\PLTL$ both are subsets of $\BLTL$
\prettyref{thm:mainltl} and \prettyref{thm:mainpltl} are obtained as corollaries
of \prettyref{thm:main}.


\section{Conclusions}
\label{sec:conclusions}

\noindent We have presented a positive answer to the question whether
\LTL\ can be checked efficiently in parallel on finite paths by giving an 
$\ACO(\logDCFL)$ algorithm for checking formulas of the extended logic \BLTL\
over finite paths. This result is a significant step
forward in the research program towards a complete picture of the
complexities of the path checking problems across the spectrum of
temporal logics, which was started in 2003 by Markey and
Schnoebelen~\cite{Markey+Schnoebelen/03/Model}.  While other
extensions of \LTL, for example with Chop or Past+Now, immediately render
the path checking problem
\poly-complete and, hence,
inherently sequential~\cite{Markey+Schnoebelen/03/Model}, \LTL\ with past and bounds can be checked
efficiently in parallel.

There is a growing practical demand for efficient parallel algorithms,
driven by the increasing availability of powerful (and inherently
parallel) programmable hardware. For example, tools that translate PSL
assertions to hardware-based
monitors~\cite{Dahan+others/05/Combining,BouleZilic2008,FinkbeinerKuhtz09}
can immediately apply our construction to evaluate subformulas
consisting of bounded and past operators in parallel rather than
sequentially.
Similarly, monitoring tools based on \PLTL\ can buffer constant
chunks of the input and then evaluate the buffered input in parallel using
our construction.

The capability of our algorithm to absorb the exponential
succinctness of past and bounds is due to the use of planar circuits
as a representation of partially evaluated subformulas, which allows
the evaluation of the formula to efficiently stop and resume, as
dictated by the dependencies between the subformulas. We expect
that the use of planar circuits as a data structure in parallel
verification algorithms, following the pattern of our construction,
will find applications in other model checking problems as well. 

There are several open questions that deserve further attention.
There is still a gap between $\ACO(\logDCFL)$ and the known
lower bound, $\NCO$. 
There is some hope to further
reduce the upper bound towards $\NCO$, the currently known lower bound, because our construction relies on the algorithm by
Chakraborty and Datta (cf.~\prettyref{thm:ChakrabortyDatta2006}) for
evaluating monotone Boolean planar circuits with all constant gates on
the outer face.  The circuits that appear in our construction actually
exhibit much more structure.  However, we are not aware of any
algorithm that takes advantage of that and performs better than
\logDCFL.
An intriguing question along the way is whether
the path checking complexities of \LTL\ and \BLTL\ are actually the
same: while they are both in \NC, the circuits resulting from \BLTL\ formulas seem to be
combinatorially more complex. Finally, an interesting challenge
is to exploit the apparent ``cheapness'' of the \BLTL\ path checking
problem beyond parallelization, for example in memory-efficient
algorithms.

%
%
%


\bibliographystyle{plain}
\bibliography{literature}

\appendix


\section*{Appendix: Construction of \texorpdfstring{$\circuit{\rho}$}{circuits}}\label{sec:appendix}

\noindent Let $n = \abs{\rho}$.
Let $b \in \Nat$.
Let $I = v_0, \dots, v_{n-1}$ and $O = o_0, \dots, o_{n-1}$. 
Let $G = I \cup O$. 
Let $H=\{g_{i,j} \mid 0 \leq i < n, 0 \leq j \leq b\}$.
Let $O_H = g_{0,0}, \dots, g_{n-1,0}$.
Let $I_H = g_{0,b}, \dots, g_{n-1,b}$.

For an atomic proposition $p$
$\circuit{\rho}(p) = \langle O,l,\varepsilon,O \rangle$,
where $l(o_i) = 1$ iff $p \in \rho_i$ and $\varepsilon$ denotes an empty
input sequence.

\noindent $\circuit{\rho}(\vee,s) = \langle G,l,I,O \rangle$, where 
\begin{align*}
  l(o_i) &= \begin{cases}
    \langle \gaid, v_{i}\rangle &\text{for } s_i = 0 \text{ and } 0 \leq i < n,\\ 
    1 &\text{for } s_i = 1 \text{ and } 0 \leq i < n,
  \end{cases}\\
  l(v_i) &= \gvar \text{ for } 0 \leq i < n;
\end{align*}

\noindent $\circuit{\rho}(s,\vee) = \circuit{\rho}(\vee,s)$;

\noindent $\circuit{\rho}(\wedge,s) = \langle G,l,I,O \rangle$, where
\begin{align*}
  l(o_i) &= \begin{cases}
    \langle \gaid, v_{i}\rangle &\text{for } s_i = 1 \text{ and } 0 \leq i < n,\\ 
    0 &\text{for } s_i = 0 \text{ and } 0 \leq i < n,
  \end{cases}\\
  l(v_i) &= \gvar \text{ for } 0 \leq i < n;
\end{align*}

\noindent $\circuit{\rho}(s,\wedge) = \circuit{\rho}(\wedge,s)$;


\noindent $\circuit{\rho}(\Next) = \langle G,l,I,O \rangle$, where 
\begin{align*}
  l(o_i) &= \begin{cases}
    \langle \gaid, v_{i+1}\rangle &\text{for } 0 \leq i < n-1,\\ 
    0 &\text{for } i = n-1, 
  \end{cases}\\
  l(v_i) &= \gvar \text{ for } 0 \leq i < n;
\end{align*}

\noindent $\circuit{\rho}(\WNext) = \langle G,l,I,O \rangle$, where 
\begin{align*}
  l(o_i) &= \begin{cases}
    \langle \gaid, v_{i+1}\rangle &\text{for } 0 \leq i < n-1,\\ 
    1 &\text{for } i = n-1, 
  \end{cases}\\
  l(v_i) &= \gvar \text{ for } 0 \leq i < n;
\end{align*}

\noindent $\circuit{\rho}(\until,s) = \langle G,l,I,O \rangle$, where 
\begin{align*}
  l(o_{i}) &= \begin{cases}
    \langle \gand, v_i, o_{i+1}\rangle &\text{for } 0 \leq i < n-1 \text{ and } s_i = 0,\\ 
    1 &\text{for } 0 \leq i < n-1 \text{ and } s_i = 1,\\
    s_{n-1} &\text{for } i = n-1,
  \end{cases}\\
  l(v_i) &= \gvar \text{ for } 0 \leq i < n;
\end{align*}

\noindent $\circuit{\rho}(s,\until) = \langle G,l,I,O \rangle$, where 
\begin{align*}
  l(o_i) &= \begin{cases}
    \langle \gor, v_i, o_{i+1}\rangle &\text{for } 0 \leq i < n-1 \text{ and } s_i = 1,\\
    \langle \gaid, v_i \rangle &\text{for } 0 \leq i < n-1 \text{ and } s_i = 0,\\
    \langle \gaid, v_{n-1} \rangle &\text{for } i = n-1,
  \end{cases}\\
  l(v_i) &= \gvar \text{ for } 0 \leq i < n;
\end{align*}

\noindent $\circuit{\rho}(\release,s) = \langle G,l,I,O \rangle$, where 
\begin{align*}
  l(o_{i}) &= \begin{cases}
    \langle \gor, v_i, o_{i+1}\rangle &\text{for } 0 \leq i < n-1 \text{ and } s_i = 1,\\ 
    0 &\text{for } 0 \leq i < n-1 \text{ and } s_i = 0,\\
    s_{n-1} &\text{for } i = n-1,
  \end{cases}\\
  l(v_i) &= \gvar \text{ for } 0 \leq i < n;
\end{align*}

\noindent $\circuit{\rho}(s,\release) = \langle G,l,I,O \rangle$, where 
\begin{align*}
  l(o_i) &= \begin{cases}
    \langle \gand, v_i, o_{i+1}\rangle &\text{for } 0 \leq i < n-1 \text{ and } s_i = 0,\\
    \langle \gaid, v_i \rangle &\text{for } 0 \leq i < n-1 \text{ and } s_i = 1,\\
    \langle \gaid, v_{n-1} \rangle &\text{for } i = n-1,
  \end{cases}\\
  l(g) &= \gvar \text{ for } g \in I;
\end{align*}

\noindent $\circuit{\rho}(\buntil{b},s) = \langle G,l,I,O \rangle$, where
\begin{align*}
  l(o_{i}) &= \begin{cases}
    1 &\text{for } 0 \leq i < n \text{ and } s_i = 1,\\
    0 &\text{for } 0 \leq i < n \text{ and } \forall j, i \leq j < \min(i+b,n) \qdot s_j = 0,\\
    \langle \gand, v_i, o_{i+1}\rangle &\text{for } 0 \leq i < n-1 \text{ and } s_i = 0 \text{ and } \exists j,i < j < \min(i+b,n) \qdot s_j = 1,
  \end{cases}\\
  l(v_i) &= \gvar \text{ for } 0 \leq i < n;
\end{align*}

\noindent $\circuit{\rho}(s,\buntil{b}) = \langle H,l,I_H,O_H \rangle$, where
\begin{equation*}
  l(g_{i,j}) = \begin{cases}
    \left\langle \gaid,g_{i,j+1} \right\rangle &\text{for } 0 \leq i < n-1 \text{ and } j < b \text{ and } s_i = 0,\\
    \left\langle \gor,g_{i,j+1},g_{i+1,j+1} \right\rangle &\text{for } 0 \leq i < n-1 \text{ and } j < b \text{ and } s_i = 1,\\
    \left\langle \gaid,g_{i,j+1} \right\rangle &\text{for } i = n-1 \text{ and } j < b,\\
    \gvar &\text{for } j = b,
  \end{cases}
\end{equation*}

\noindent $\circuit{\rho}(\brelease{b},s) = \langle G,l,I,O \rangle$, where
\begin{align*}
  l(o_{i}) &= \begin{cases}
    0 &\text{for } 0 \leq i < n \text{ and } s_i = 0,\\
    1 &\text{for } 0 \leq i < n \text{ and } \forall j, i \leq j < \min(i+b,n) \qdot s_j = 1,\\
    \langle \gor, v_i, o_{i+1}\rangle &\text{for } 0 \leq i < n-1 \text{ and } s_i = 1 \text{ and } \exists j,i < j < \min(i+b,n) \qdot s_j = 0,
  \end{cases}\\
  l(v_i) &= \gvar \text{ for } 0 \leq i < n;
\end{align*}

\noindent $\circuit{\rho}(s,\brelease{b}) = \langle H,l,I_H,O_H \rangle$, where 
\begin{equation*}
  l(g_{i,j}) = \begin{cases}
    \left\langle \gaid,g_{i,j+1} \right\rangle &\text{for } 0 \leq i < n-1 \text{ and } j < b \text{ and } s_i = 1,\\
    \left\langle \gand,g_{i,j+1},g_{i+1,j+1} \right\rangle &\text{for } 0 \leq i < n-1 \text{ and } j < b \text{ and } s_i = 0,\\
    \left\langle \gaid,g_{i,j+1} \right\rangle &\text{for } i = n-1 \text{ and } j < b,\\
    \gvar &\text{for } j = b;
  \end{cases}
\end{equation*}


\noindent $\circuit{\rho}(\pNext) = \langle G,l,I,O \rangle$, where 
\begin{align*}
  l(o_i) &= \begin{cases}
    \langle \gaid, v_{i-1}\rangle &\text{for } 0 < i < n,\\ 
    0 &\text{for } i = 0, 
  \end{cases}\\
  l(v_i) &= \gvar \text{ for } 0 \leq i < n;
\end{align*}

\noindent $\circuit{\rho}(\pWNext) = \langle G,l,I,O \rangle$, where 
\begin{align*}
  l(o_i) &= \begin{cases}
    \langle \gaid, v_{i-1}\rangle &\text{for } 0 < i < n,\\ 
    1 &\text{for } i = 0, 
  \end{cases}\\
  l(v_i) &= \gvar \text{ for } 0 \leq i < n;
\end{align*}

\noindent $\circuit{\rho}(\puntil,s) = \langle G,l,I,O \rangle$, where 
\begin{align*}
  l(o_{i}) &= \begin{cases}
    \langle \gand, v_i, o_{i-1}\rangle &\text{for } 0 < i < n \text{ and } s_i = 0,\\ 
    1 &\text{for } 0 < i < n \text{ and } s_i = 1,\\
    s_0 &\text{for } i = 0,
  \end{cases}\\
  l(v_i) &= \gvar \text{ for } 0 \leq i < n;
\end{align*}

\noindent $\circuit{\rho}(s,\puntil) = \langle G,l,I,O \rangle$, where 
\begin{align*}
  l(o_i) &= \begin{cases}
    \langle \gor, v_i, o_{i-1}\rangle &\text{for } 0 < i < n \text{ and } s_i = 1,\\
    \langle \gaid, v_i \rangle &\text{for } 0 < i < n \text{ and } s_i = 0,\\
    \langle \gaid, v_{0} \rangle &\text{for } i = 0,
  \end{cases}\\
  l(v_i) &= \gvar \text{ for } 0 \leq i < n;
\end{align*}

\noindent $\circuit{\rho}(\prelease,s) = \langle G,l,I,O \rangle$, where 
\begin{align*}
  l(o_{i}) &= \begin{cases}
    \langle \gor, v_i, o_{i-1}\rangle &\text{for } 0 < i < n \text{ and } s_i = 1,\\ 
    0 &\text{for } 0 < i < n \text{ and } s_i = 0,\\
    s_{0} &\text{for } i = 0,
  \end{cases}\\
  l(v_i) &= \gvar \text{ for } 0 \leq i < n;
\end{align*}

\noindent $\circuit{\rho}(s,\prelease) = \langle G,l,I,O \rangle$, where 
\begin{align*}
  l(o_i) &= \begin{cases}
    \langle \gand, v_i, o_{i-1}\rangle &\text{for } 0 < i < n \text{ and } s_i = 0,\\
    \langle \gaid, v_i \rangle &\text{for } 0 < i < n \text{ and } s_i = 1,\\
    \langle \gaid, v_{0} \rangle &\text{for } i = 0,
  \end{cases}\\
  l(g) &= \gvar \text{ for } g \in I;
\end{align*}

\noindent $\circuit{\rho}(\pbuntil{b},s) = \langle G,l,I,O \rangle$, where
\begin{align*}
  l(o_{i}) &= \begin{cases}
    1 &\text{for } 0 \leq i < n \text{ and } s_i = 1,\\
    0 &\text{for } 0 \leq i < n \text{ and } \forall j, i \geq j > \max(i-b,-1) \qdot s_j = 0,\\
    \langle \gand, v_i, o_{i-1}\rangle &\text{for } 0 < i < n \text{ and } s_i = 0 \text{ and } \exists j,i > j >\max(i-b,-1) \qdot s_j = 1,
  \end{cases}\\
  l(v_i) &= \gvar \text{ for } 0 \leq i < n;
\end{align*}

\noindent $\circuit{\rho}(s,\pbuntil{b}) = \langle H,l,I_H,O_H \rangle$, where
\begin{equation*}
  l(g_{i,j}) = \begin{cases}
    \left\langle \gaid,g_{i,j+1} \right\rangle &\text{for } 0 < i < n \text{ and } j < b \text{ and } s_i = 0,\\
    \left\langle \gor,g_{i,j+1},g_{i-1,j+1} \right\rangle &\text{for } 0 < i < n \text{ and } j < b \text{ and } s_i = 1,\\
    \left\langle \gaid,g_{0,j+1} \right\rangle &\text{for } j < b,\\
    \gvar &\text{for } j = b;
  \end{cases}
\end{equation*}

\noindent $\circuit{\rho}(\pbrelease{b},s) = \langle G,l,I,O \rangle$, where
\begin{align*}
  l(o_{i}) &= \begin{cases}
    0 &\text{for } 0 \leq i < n \text{ and } s_i = 0,\\
    1 &\text{for } 0 \leq i < n \text{ and } \forall j, i \geq j > \min(i-b,-1) \qdot s_j = 1,\\
    \langle \gor, v_i, o_{i-1}\rangle &\text{for } 0 < i < n \text{ and } s_i = 1 \text{ and } \exists j,i > j > \max(i-b,-1) \qdot s_j = 0,
  \end{cases}\\
  l(v_i) &= \gvar \text{ for } 0 \leq i < n;
\end{align*}

\noindent $\circuit{\rho}(s,\pbrelease{b}) = \langle H,l,I_H,O_H \rangle$, where 
\begin{equation*}
  l(g_{i,j}) = \begin{cases}
    \left\langle \gaid,g_{i,j+1} \right\rangle &\text{for } 0 < i < n \text{ and } j < b \text{ and } s_i = 1,\\
    \left\langle \gand,g_{i,j+1},g_{i-1,j+1} \right\rangle &\text{for } 0 < i < n \text{ and } j < b \text{ and } s_i = 0,\\
    \left\langle \gaid,g_{0,j+1} \right\rangle &\text{for } j < b,\\
    \gvar &\text{for } j = b.
  \end{cases}
\end{equation*}


\end{document}